\newtheorem{theorem}{Theorem}
\newtheorem{lemma}[theorem]{Lemma}
\newtheorem{example}{Example}
\newtheorem{definition}{Definition}
\newtheorem{corollary}[theorem]{Corollary}
\newcommand{\cond}{\,\vert\,}
\newcommand{\defeq}{\triangleq}
\newfont{\bbb}{msbm10 scaled 500}
\newfont{\bb}{msbm10 scaled 1100}
\newcommand{\CC}{\mbox{\bb C}}
\newcommand{\RR}{\mbox{\bb R}}
\newcommand{\FF}{\mbox{\bb F}}
\newcommand{\EE}{\mbox{\bb E}}
\newcommand{\av}{{\bf a}}
\newcommand{\dv}{{\bf d}}
\newcommand{\ev}{{\bf e}}
\newcommand{\hv}{{\bf h}}
\newcommand{\rv}{{\bf r}}
\newcommand{\xv}{{\bf x}}
\newcommand{\yv}{{\bf y}}
\newcommand{\onev}{{\bf 1}}
\newcommand{\Id}{{\bf I}}
\newcommand{\Qm}{{\bf Q}}
\newcommand{\Rm}{{\bf R}}
\newcommand{\Sm}{{\bf S}}
\newcommand{\Zm}{{\bf Z}}
\newcommand{\Cc}{{\cal C}}
\newcommand{\Hc}{{\cal H}}
\newcommand{\Ic}{{\cal I}}
\newcommand{\Jc}{{\cal J}}
\newcommand{\Kc}{{\cal K}}
\newcommand{\Nc}{{\cal N}}
\newcommand{\Qc}{{\cal Q}}
\newcommand{\alphav}{\hbox{\boldmath$\alpha$}}
\newcommand{\nuv}{\hbox{\boldmath$\nu$}}
\newcommand{\muv}{\hbox{\boldmath$\mu$}}
\newcommand{\sigmav}{\hbox{\boldmath$\sigma$}}
\renewcommand{\arg}{{\hbox{arg}}}
\DeclareFontFamily{U}{cmfi}{}
\DeclareFontShape{U}{cmfi}{m}{n}{ <-> cmfi10 }{}
\DeclareSymbolFont{CMFI}{U}{cmfi}{m}{n}
\def\argmax{\mathop{\rm argmax}}
\renewcommand{\Qm}{\pmb{Q}}
\renewcommand{\Rm}{\pmb{R}}
\renewcommand{\Sm}{\pmb{S}}
\renewcommand{\Zm}{\pmb{Z}}
\renewcommand{\av}{\pmb{a}}
\renewcommand{\dv}{\pmb{d}}
\renewcommand{\ev}{\pmb{e}}
\renewcommand{\hv}{\pmb{h}}
\renewcommand{\rv}{\pmb{r}}
\renewcommand{\xv}{\pmb{x}}
\renewcommand{\yv}{\pmb{y}}
\newcommand{\gp}[1]{{#1}}
\newcommand{\bigO}[1]{\ensuremath{\mathop{}\mathopen{}O\mathopen{}\left(#1\right)}}
\begin{document}
\title{Alpha Fair Coded Caching}
\author{\IEEEauthorblockN{Apostolos Destounis$^{1}$,
		Mari Kobayashi$^2$, 
		Georgios Paschos $^1$,
		Asma Ghorbel$^2$ 
		\\}
		\IEEEauthorblockA{$^1$ France Research Center, Huawei Technologies Co. Ltd., email: firstname.lastname@huawei.com%\\
			%Arcs de Seine, 20, quai du Point du Jour, $92100$, Boulogne-Billancourt, France\\
		%	email: firstname.lastname@huawei.com
		}
	\IEEEauthorblockA{$^2$Centrale-Sup\'elec, %$3$ rue Joliot-Curie,  $91192$, Gif-sur-Yvette, cedex. 
		France, email: firstname.lastname@centralesupelec.fr%\\
	}
}
\maketitle

%%%%%%%%%%%%%%%%%%%%%%%%%%%%%%%%%%%%%%%%%%%%%%%%%%%%%%%%%%%%%%%%%%%%
%%%%%%%%%%%%%%%%%%%%%%%%%%%%%%%%%%%%%%%%%%%%%%%%%%%%%%%%%%%%%%%%%%%%
\begin{abstract}
The performance of existing \emph{coded caching} schemes is  sensitive to worst channel quality, a problem which is exacerbated when communicating over fading channels.
In this paper we address this limitation in the following manner:  \emph{in short-term}, we allow transmissions to subsets of users with good channel quality, avoiding users with fades, while \emph{in long-term} we ensure fairness across the different users. 
Our online scheme combines (i) joint scheduling and power control for the broadcast channel with fading, and (ii) congestion control for ensuring the optimal long-term average performance.
We restrict the caching operations to the decentralized scheme of \cite{maddah2013decentralized}, 
and subject to this restriction we prove that our scheme  has near-optimal overall performance with respect to the convex alpha-fairness coded caching optimization.
By tuning the coefficient  alpha, the operator can differentiate user performance with respect to  video delivery rates achievable  by coded caching.
 We demonstrate via simulations our scheme's  superiority  over legacy coded caching and  unicast opportunistic scheduling, which are identified as special cases of our general framework.
\end{abstract}
\begin{IEEEkeywords}
Broadcast channel, coded caching, fairness, Lyapunov optimization.
 \end{IEEEkeywords}

%%%%%%%%%%%%%%%%%%%%%%%%%%%%%%%%%%%%%%%%%%%%%%%%%%%%%%%%%%%%%%%%%%%%
\section{Introduction}\label{sec:intro}
A key challenge \gp{for the}  future wireless networks is the increasing video traffic demand, which reached 70\% of total mobile IP traffic in 2015 \cite{cisco15}.
Classical downlink systems cannot meet this demand since they have limited resource blocks, and therefore as the number of simultaneous video transfers  $K$ increases,  the  per-video throughput  vanishes as $1/K$.
Recently it was shown that \gp{scalable} per-video throughput can be achieved if the communications are synergistically designed with caching at the receivers. 
Indeed, the recent  breakthrough of \emph{coded caching} \cite{maddah2013fundamental} has inspired a  rethinking of  wireless downlink. 
Different video sub-files are cached at the receivers, \gp{and video requests are served by coded multicasts.} 

By careful selection of sub-file caching \gp{and exploitation of the  broadcast  wireless channel},   the transmitted signal is simultaneously
useful for \gp{decoding at users with  different  video requests}. 
Although this \gp{scheme--theoretically proved to scale well--can potentially} resolve the future downlink bottleneck, several limitations hinder its applicability in practical systems \cite{misconceptions}. In this work, we take a closer look to the limitations that arise from the fact that \emph{coded caching was originally designed for a symmetric error-free shared link.} 

\gp{If instead we consider a realistic model for the wireless channel, 
} we observe that a naive application of coded caching faces a \emph{short-term} limitation: since the channel qualities of the users fluctuate over time and our transmissions need to reach all users, the transmissions need to be designed for  the worst channel quality. This is in stark contrast with standard  downlink techniques, like \emph{opportunistic scheduling} \cite{stolyar,Li05, knopp}, which serve the user with the best instantaneous channel quality. 
Thus, a first challenge is to discover a way to allow coded caching technique to opportunistically exploit the fading of the wireless channel.

\gp{Apart from the fast fading consideration,} there is also a \emph{long-term} limitation due to the network topology. The user locations might vary, which leads to consistently poor channel quality for the ill-positioned users. 
The classical coded caching scheme is designed to deliver 
\gp{equal video shares to all users,
which leads to ill-positioned users consuming most of the air time and hence driving the
overall system performance to low efficiency.} In the literature, this problem has been resolved by the use of fairness among user throughputs \cite{Li05}. 
By allowing poorly located users to receive less throughput than others, precious airtime is saved and the overall system performance is greatly increased. Since the sum throughput rate and equalitarian fairness are typically the two extreme cases, past works have proposed the use of alpha-fairness \cite{mowalrand} which allows  to select the coefficient $\alpha$ and drive the system to any desirable  tradeoff point in between of the two extremes. 
Previously, the alpha-fair objectives have been studied in the context of (i) multiple user activations \cite{stolyar}, (ii) multiple antennas \cite{caire_fairnessMIMO} and (iii) broadcast channels \cite{caire_fairnessBC}. However, here the fairness problem is further complicated by the interplay between scheduling and the coded caching operation. In particular, we wish to shed light into the following questions: \emph{what is the right user grouping and how we should design the codewords to achieve our fairness objective while adapting to changing channel quality?}

To address these questions, we study the content delivery over a realistic block-fading broadcast channel, where the channel quality varies across users and time. 
In this setting, 
\gp{we design a scheme that decouples transmissions from coding. In the transmission side, we select the multicast user set dynamically depending on the instantaneous channel quality and user urgency captured by queue lengths. In the coding side, we adapt the codeword construction of \cite{maddah2013decentralized} depending on how fast the transmission side serves each user set. Combining with an appropriate congestion controller, we show that this approach yields our alpha-fair objective.}
More specifically, our approaches and contributions are summarized below:
\begin{itemize}
\item[1)] We impose a novel queueing structure which decomposes the channel scheduling from the codeword construction. Although it is clear that the codeword construction needs to be adaptive to channel variation, our scheme ensures this through our \emph{backpressure} that connects the user queues and the codeword queues. Hence, we are able to show that this decomposition is without loss of optimality.
\item[2)] We then provide an online policy consisting of (i) admission control of new files into the system; (ii) combination of files to perform coded caching; (iii) scheduling and power control of codeword transmissions to subset of users on the wireless channel. We prove that 
the long-term video delivery rate vector achieved by our scheme is a near optimal solution to the alpha-fair optimization problem under the specific coded caching scheme \cite{maddah2013decentralized}. 
\item[3)] Through numerical examples, we demonstrate the superiority of our approach versus (a) opportunistic scheduling with unicast transmissions and classical network caching (storing a fraction of each video), (b) standard coded caching based on transmitting-to-all.
\end{itemize}

\subsection{Related work}

Since coded caching was first proposed \cite{maddah2013fundamental} and its potential was recognized by the community, 
substantial efforts have been devoted to quantify the gain in realistic scenarios, including decentralized
placement \cite{maddah2013decentralized}, non-uniform popularities \cite{ji2015order,niesen2013coded}, and device-to-device (D2D)  networks \cite{ji2013fundamental}.
A number of recent works replace the original perfect shared link with wireless channels \cite{zhang2016wireless,bidokhti2016noisy,NgoAllerton2016}.
Commonly in the works with wireless channels, 
the performance of coded caching is limited by the user in the worst channel condition because the wireless multicast capacity is determined by the worst user \cite[Chapter 7.2]{el2011network}. \gp{This limitation of coded caching has been recently highlighted in  \cite{NgoAllerton2016}, while similar conclusions and some directions are given in  \cite{zhang2016wireless,bidokhti2016noisy}. Our work is the first to addresses this aspect by jointly designing the transmissions over the broadcast channel and scheduling appropriate subsets of users.}

Most past works deal with {\it offline} caching in the sense that both cache placement and delivery phases are performed once and do not capture the random and asynchronous nature of video traffic. 
The papers \cite{pedarsani2016online, niesen2015coded} addressed partly the online nature by studying cache eviction strategies, and delay aspects. 
In this paper, we explore a different online aspect. Requests for video files arrive in an online fashion, and transmissions are scheduled over time-varying wireless channels.

Online transmission scheduling over wireless channels has been extensively studied in the context of opportunistic scheduling \cite{stolyar} and network utility maximization \cite{neely10}. Prior works emphasize two fundamental aspects: (a) the balancing of user rates according to  fairness and efficiency considerations, and (b) the opportunistic exploitation of the time-varying fading channels.
Related to our work are the studies of wireless downlink with broadcast degraded channels; \cite{Seong06} gives a maxweight-type of policy and \cite{Eryilmaz01} provides a throughput optimal policy based on a fluid limit analysis.
Our work is the first to our knowledge that studies coded caching in this setting.
The new element in our study is the joint consideration of user scheduling with codeword construction for the coded caching delivery phase.

%%%%%%%%%%%%%%%%%%%%%%%%%%%%%%%%%%%%%%%%%%%%%%%%%%%%%%%%%%%%%%%%%%%%
\section{System Model and Problem Formulation}
We study a wireless downlink consisting of a base station and $K$ users. 
The users are interested in downloading files over the wireless channel.

\subsection{Fair file delivery}

The performance metric is the \emph{time average delivery rate of files} to user $k$, denoted by $\overline{r}_k$. Hence our objective is expressed with respect to the vector of delivery rates $\rv$. 
We are interested in the \emph{fair file delivery} problem:
\begin{align}\label{eq:problem}
\overline{\boldsymbol{r}}^* = &\arg\max_{\overline{\boldsymbol{r}} \in \Lambda}\sum_{k=1}^Kg(\overline{r}_k),
\end{align}

where $\Lambda$ denotes the set of all feasible delivery rate vectors--clarified in the following subsection--and the utility function corresponds to the \emph{alpha fair} family of  concave functions obtained by choosing: 
\begin{align}
g(x) = \begin{cases}
\frac{(d+x)^{1-\alpha}}{1-\alpha}, \alpha\neq 1\\
\log(1+x/d), \alpha = 1
\end{cases}
\end{align}

for some arbitrarily small $d>0$ (used to extend the domain of the functions to $x=0$).  Tuning the value of $\alpha$ changes the shape of the utility function and consequently drives the system performance $\overline{\boldsymbol{r}}^*$ to different  points: (i) $\alpha=0$ yields max sum delivery rate, (ii)  $\alpha\to\infty$ yields max-min delivery rate \cite{mowalrand}, (iii)   $\alpha = 1$ yields proportionally fair delivery rate \cite{pfscheduling}.
Choosing $\alpha\in (0,1)$ leads to a tradeoff between max sum and proportionally fair delivery rates.

The optimization  \eqref{eq:problem}  is designed to allow us tweak the performance of the system;  we highlight its  importance by an example.
Suppose that for a 2-user system $\Lambda$ is given by the convex set shown on figure \ref{fig:example}.

Different boundary points are obtained as solutions to \eqref{eq:problem}.
If we choose $\alpha=0$, the system is operated at the point that maximizes the sum $\overline{r}_1+\overline{r}_2$. The choice $\alpha\to\infty$ leads to the maximum $r$ such that $\overline{r}_1=\overline{r}_2=r$, while $\alpha=1$ maximizes the sum of logarithms. 
The  operation point A is obtained when we always broadcast to all users at the weakest user rate and use \cite{maddah2013fundamental} for coded caching transmissions.
Note that this results in a significant loss of efficiency due to the variations of the fading channel, and consequently A lies in the interior of $\Lambda$.
We may infer that the point $\alpha\to\infty$ is obtained by avoiding transmissions to users with  instantaneous poor channel quality but still balancing their throughputs in the long run.
\begin{figure}
\vspace{-10pt}
\begin{center}
\includegraphics[width=0.25\textwidth,clip=]{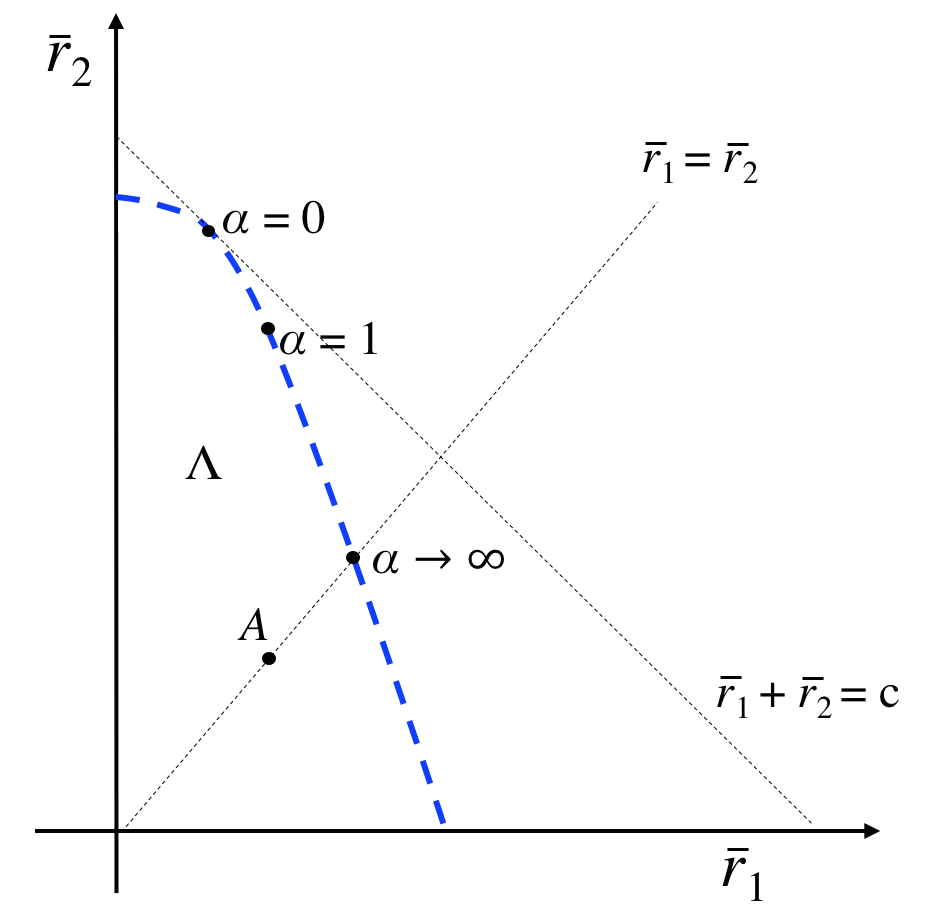}
\vspace{-10pt}
\caption{Illustration of the feasibility region and different performance operating points for $K=2$ users. Point A corresponds to a naive adaptation of \cite{maddah2013fundamental} on our channel model, while the rest points are solutions to our fair file delivery problem. }
\label{fig:example}
\end{center}
\vspace{-15pt}
\end{figure}

\subsection{Transmission model}

To analyze the set of feasible rate vectors $\Lambda$  we need to zoom in the detailed model of transmissions.

\textbf{Caching model.}
There are $N$ equally popular files $W_1, \dots, W_N$, each $F$ bits long.
The files are available to the base station.  User $k$ is equipped with cache memory $Z_k$ of $MF$ bits, where  $M\in [0,N]$.
Caching placement is performed during off-peak hour, and the goal is to fill the caches up to the memory constraint with selected bits.  To this end, we need to select $K$ \emph{caching functions} $\phi_k: \FF_2^{NF} \to\FF_2^{MF}$ which map the files $W_1, \dots, W_N$ into the cache contents 
\begin{align*}
Z_k \defeq \phi_k(W_1,\dots, W_N), ~~\forall k=1,\dots,K.
\end{align*}
The caching functions can be used to cache a few entire files, or a small fraction from each file, or even  coded combinations of subfiles \cite{maddah2013fundamental,pedarsani2016online}.
It is important to note that the caching functions are selected once, without knowledge of future requests, and are  fixed throughout our system operation.\footnote{A reasonable  extension is to enable  infrequent updates of the caching placement phase.}

\textbf{Downlink channel model.} We consider a standard block-fading broadcast channel, such that the channel state remains constant over a slot of $T_{\rm slot}$ channel uses  and changes from one slot to another in an i.i.d. manner. 
The channel output of user $k$ in any channel use of slot $t$ is given by  
\begin{align}\label{eq:BFC}
\yv_k(t)=\sqrt{h_k(t)} \xv(t)+\nuv_k(t),
\end{align}
where the channel input $\xv\in\CC^{T_{\rm slot}}$ is subject
to the power constraint $\EE[\Vert \xv \Vert^2] \leq PT_{\rm slot}$; $\nuv_{k}(t)\sim\Nc_{\CC}(0, \Id_{T_{\rm slot}})$ are additive white Gaussian noises with covariance matrix identity of size $T_{\rm slot}$, assumed independent of each other; $\{h_k(t)\in\CC\}$ are channel fading coefficients $\sim \beta_k^2\text{exp}(1)$ independently distributed across time and users, with $\beta_k$ denoting the path-loss parameter of user $k$. 

\textbf{Encoding and transmissions.}  The transmissions aim to contribute information towards the delivery of a specific vector of file requests $\dv(t)$, where $d_k(t)\in \{1, \dots, N\}$ denotes 
the index of the requested file by user $k$ in slot $t$. Here $N$ is the video library size, typically in the order of 10K. The requests are generated randomly, and whenever a file is delivered to user $k$, the next request of this user will be for another randomly selected file. 

At each time slot, the base station observes the channel state $\hv(t)=(h_1(t), \dots,h_K(t))$  and the request vector up to $t$, $\dv^t$, 
constructs a transmit symbol using the \emph{encoding function} $f_t:\{1,..,N\}^{Kt} \times \CC^{K} \to\CC^{T_{\rm slot}}$. 
 \[
 \xv(t) = f_t \left( \dv^t,\hv(t)\right),
 \]
 
  Finally, it transmits a codeword $\xv(t)$ for the $T_{\rm slot}$ channel uses  over the fading broadcast channel in slot $t$ . 
  The encoding function may be chosen at each slot  to contribute information to a selected subset of users $\mathcal{J}(t)\subseteq \{1,\dots,K\}$. This allows several possibilities, e.g. to send more information to a small set of users with good instantaneous channel qualities, or less information to a large set that includes users with poor quality. 

\textbf{Decoding.}
At slot $t$, each user $k$ observes the local cache contents $Z_k$ and the sequence of channel outputs so far $ y_k(\tau),~~\tau=1,\dots,t$ and employs a \emph{decoding function} $\xi_k$ to determine the decoded files.
Let $D_k(t)$ denote the number of files decoded by user $k$ after $t$ slots.
The decoding function $\xi_k$ is a mapping

  \[
\xi_k: \CC^{T_{slot}t}\times\CC^{Kt} \times \FF_2^{FM}  \times \{1,..,N\}^{Kt} \to  \FF_2^{FD_k(t)}.
  \]
The decoded files of user $k$ at slot $t$ are given by $\xi_k(y_k^{T_{\rm slot}t}, Z_k, \hv^t, \dv^t)$,
and depend on the channel outputs and states up to $t$, the local cache contents, and the requested files of all users up to $t$. 
A file is incorrectly decoded if it does not belongs to the set of requested files. The number of incorrectly decoded files are then given by $|\cup_t\{\xi_k(t)\} \setminus d_k^t|$ and
the number of correctly decoded files at time $t$ is: 
\[
C_k(t)= D_k(t)- |\cup_t\{\xi_k(t)\} \setminus d_k^t|
\]

\begin{definition}[Feasible rate]
A rate vector $\overline{\rv}=(\overline{r}_1,\dots, \overline{r}_K)$ is said to be {\it feasible} $\overline{\rv}\in\Lambda$ if there exist functions $([\phi_k], [f_t], [\xi_k])$  such that:

\vspace{-0.05in}
\[
\overline{{r}}_k=\limsup_{t\rightarrow\infty} \frac{C_k(t)}{t},
\]

where the rate is measured in file/slot. 
\end{definition}

In contrast to past works which study the performance of one-shot coded caching \cite{maddah2013fundamental,maddah2013decentralized,pedarsani2016online}, our rate metric measures the ability of the system to continuously deliver files to users.

\subsection{Code-constrained rate region}
{Finding the optimal policy is very complex. 
In this paper, we restrict the problem to specific class of policies given by the following mild assumptions: 
\begin{definition}[Admissible class policies $\Pi^{CC}$]
The admissible policies have the following characteristics: 
\begin{enumerate}
\item The caching placement and delivery follow the decentralized scheme \cite{maddah2013decentralized}.
\item The users request distinct files, i.e., the ids of the requested files of any two users are different.
\end{enumerate}
\end{definition}
Since we restrict our action space, the delivery rate feasibility region, $\Lambda^{CC}$, of the class of policies $\Pi^{CC}$ is  smaller than the one for the original problem $\Lambda$. However, these restrictions allow us to come up with a concrete solution approach. Note that the optimal cache and transmission design policy is already a very hard problem even in the simple case of broadcast transmissions with a fixed common rate, and the method in \cite{maddah2013fundamental},   \cite{maddah2013decentralized}  are  practical approaches with good performance.  In addition, looking at demand IDs when combining files would be very complex and, because of the big library sizes, is not expected to bring substantial gains (it is improbable that two users will make request for the same file in close time instances).   
}

%%%%%%%%%%%%%%%%%%%%%%%%%%%%%%%%%%%%%%%%%%%%%%%%%%%%%%%%%%%%%%%%%%%%
\section{Offline Coded Caching}

In this section we briefly review decentralized coded caching, first proposed in \cite{maddah2013decentralized}, and used by all  admissible policies $\Pi^{CC}$.
We set $m=\frac{M}{N}$ the normalized memory size. Under the memory constraint of $MF$ bits, each user $k$ independently caches a subset of $mF$ bits of file $i$, chosen uniformly at random for $i=1,\dots, N$. By letting $W_{i|\Jc}$ denote the sub-file of $W_i$ stored exclusively in the cache memories of the user set $\Jc$, the cache memory $Z_k$ of user $k$ after decentralized placement is given by
\begin{align} \label{eq:Zk}
Z_k =\{ W_{i \cond \Jc}:  \;\; \forall \Jc \subseteq[K], \forall \Jc \ni k , \forall i =1,\dots, N \}.
\end{align}
The size of each sub-file measured in bits is given by  
\begin{align}
|W_{i \cond \Jc}|= m^{|\Jc|}\left(1-m\right)^{K-|\Jc|} \label{eq:LLN}
\end{align}
as $F\rightarrow \infty$. The above completely determine the caching functions.

Once the requests of all users are revealed, the offline scheme proceeds to the delivery of the requested files (delivery phase). Assuming that user $k$ requests file $k$, i.e. $d_k =k$, the server generates and conveys the following codeword simultaneously useful to the subset of users $\Jc$:  
\begin{align}
V_{\Jc}=\oplus_{k\in \Jc}W_{k|\Jc\setminus\{k\}},
\end{align}
where $\oplus$ denotes the bit-wise XOR operation. The main idea here is to create a codeword useful to a subset of users by exploiting the receiver side information established during the placement phase. 
It is worth noticing that the {\it coded} delivery with XORs significantly reduces the number of transmissions. 
Compared to uncoded delivery, where the sub-files are sent sequentially and the number of transmissions are equal to $|\Jc|\times|W_{k|\Jc\setminus\{k\}}|$, the coded delivery requires the transmission of $|W_{k|\Jc\setminus\{k\}}|$, yielding a reduction of a factor $|\Jc|$.
In a practical case of $N>K$, it has been proved that decentralized coded caching achieves the total number of transmissions, measured in the number of files, given by \cite{maddah2013decentralized}
\begin{align}\label{eq:mad13}
T_{\rm tot}(K,m) = \frac{1}{m} \left(1-m\right) \left\{1-\left(1-m\right) ^K \right \}. 
\end{align}
On the other hand, in uncoded delivery, the number of transmissions is given by $K(1-m)$ since it exploits only {\it local} caching gain at each user. For a system with $K=30$ users and normalized memory of $m=1/3$, the minimum transmissions required by uncoded delivery is $20$ and that of decentralized coded caching is $2$, yielding a gain of factor $10$.  

 In order to further illustrate the placement and delivery of decentralized coded caching,  we provide an three-user example. 
\begin{example}
For the case of $K=3$ users in Fig.\ref{fig:ex3}, let us assume that user 1, 2, 3, requests file $A, B, C$, respectively.  
After the placement phase, a given file $A$ will be partitioned into 8 subfiles.

\begin{figure}
\vspace{-8pt}
\begin{center}
\includegraphics[width=0.4\textwidth,clip=]{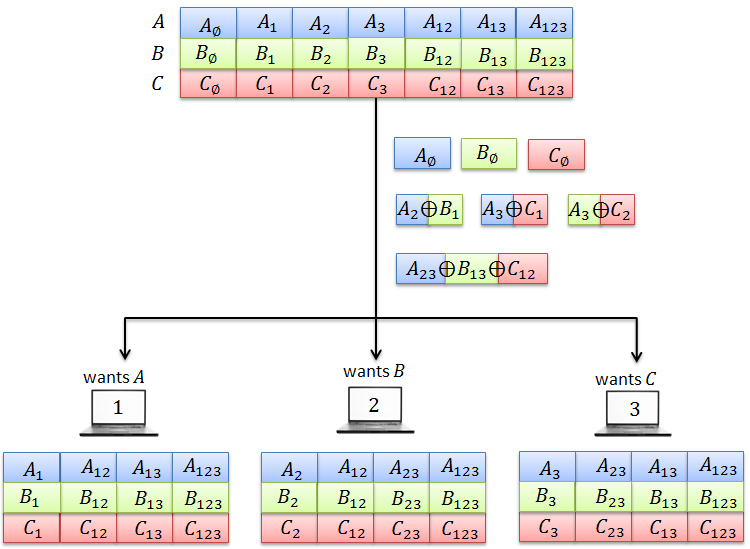}
\vspace{-2pt}
\caption{Decentralized coded caching for $K=3$}
\label{fig:ex3}
\end{center}
\vspace{-20pt}
\end{figure}

Codewords to be sent are the following
\begin{itemize}
\item  $A_{ \emptyset}$, $B_{ \emptyset}$ and $C_{ \emptyset}$ to user $1$, $2$ and $3$ respectively.  
\item  $A_{ 2}\oplus B_{ 1}$ is intended to users $\{1,2\}$. Once received, user $1$ decodes $A_{ 2}$ by combining the received codeword with $B_{ 1}$ given in its cache. Similarly user $2$ decodes $B_{ 1}$. The same approach holds for codeword $B_{ 3}\oplus C_{ 2}$ to users $\{2,3\}$ and codeword $A_{ 3}\oplus C_{ 1}$ to users $\{1,3\}$
\item  $A_{ 23}\oplus B_{ 13}\oplus C_{ 12}$ is intended users ${1,2,3}$. User $1$ can decode $A_{ 23}$ by combining the received codeword with $\{B_{ 13},C_{ 12}\}$ given in its cache. The same approach is used for user $2$, $3$ to decode $B_{ 13}$, $C_{ 12}$ respectively.  
\end{itemize}     
\end{example}

%%%%%%%%%%%%%%%%%%%%%%%%%%%%%%%%%%%%%%%%%%%%%%%%%%%%%%%%%%%%%%%%%%%%
\section{Broadcasting Private and Common Messages}\label{section:broadcasting}

%\vspace{-0.01in}
In this section, we address the question on how the transmitter shall convey private and multiple common messages, each intended to a subset of users, while opportunistically exploiting the underlying wireless channel.  We start by remarking that the channel in \eqref{eq:BFC} for a given channel realization $\hv$ corresponds to the Gaussian degraded broadcast channel. Without loss of generality, let us assume $h_1\geq \dots \geq h_K$ so that the following Markov chain holds.
\vspace{-0.05in}
 \[X \leftrightarrow Y_1  \leftrightarrow  \dots  \leftrightarrow  Y_K. \]
\vspace{-0.05in}
The capacity region of the degraded broadcast channel for $K$ private messages and a common message is well-known \cite{el2011network}.
 In this section, we consider a more general setup where the transmitter wishes to convey 
$2^K-1$ mutually independent messages, denoted by $\{M_{\Jc}\}$, where $M_{\Jc}$ denotes the message intended to the users in subset $\Jc\subseteq \{1,\dots, K\}$. Each user $k$ must decode all messages $\{M_{\Jc}\}$ for $\Jc\ni k$. By letting $R_{\Jc}$ denote the multicast rate of the message $M_{\Jc}$, we say that the rate-tuple $\Rm\in \RR_+^{2^K-1}$ is achievable if there exists encoding and decoding functions which ensure the reliability and the rate condition. The capacity region is defined as the supremum of the achievable rate-tuple, where the rate is measured in bit/channel use. 
\begin{theorem}\label{th:capacity_BC}
The capacity region $\Gamma(\hv)$ of a $K$-user degraded Gaussian broadcast channel with fading gains $h_1 \geq \dots \geq h_K$ and $2^K-1$ independent messages $\{M_{\Jc}\}$ is given by 
\begin{align}
R_1 & \leq \log(1+ h_1\alpha_1 P) \\ \label{eq:9}
\sum_{\Jc \subseteq \{1,\dots, k\}: k\in\Jc} R_{\Jc} & \leq \log\frac{1+ h_k \sum_{j=1}^{k} \alpha_j P}{1+ h_k\sum_{j=1}^{k-1} \alpha_j P} \;\;\; k=2, \dots, K\end{align}
for non-negative variables $\{\alpha_k\}$ such that $\sum_{k=1}^K \alpha_k \leq 1$. 
\end{theorem}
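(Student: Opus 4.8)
The plan is to recognize that, for a fixed fading realization $\hv$ with $h_1 \geq \dots \geq h_K$, the channel \eqref{eq:BFC} is a physically degraded Gaussian broadcast channel, and to reduce the $2^K-1$-message problem to the classical private-message degraded broadcast channel. The key structural observation is that a message $M_{\Jc}$ must be decoded by every user in $\Jc$, and by degradedness the binding receiver is the weakest one, namely user $\max \Jc$. I would therefore aggregate messages by their weakest required receiver and define the layer rates $\rho_k \defeq \sum_{\Jc:\, \max\Jc = k} R_{\Jc}$, which are exactly the left-hand sides appearing in \eqref{eq:9}, since the condition $\Jc \subseteq \{1,\dots,k\},\ k\in\Jc$ is equivalent to $\max\Jc = k$. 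Once the region is expressed through the $K$ aggregate rates $\{\rho_k\}$, the claim reduces to the statement that $\{\rho_k\}$ lies in the capacity region of a $K$-user degraded Gaussian broadcast channel in which user $k$ carries a private message of rate $\rho_k$.

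For achievability I would use superposition coding with $K$ power layers. I split the power as $\alpha_1 P,\dots,\alpha_K P$ with $\sum_k \alpha_k \leq 1$, and encode the aggregate of all messages $\{M_{\Jc}:\max\Jc = k\}$ into a single Gaussian codeword carried at power $\alpha_k P$. Each user $k$ performs successive decoding, recovering layers $K, K-1, \dots, k$ while treating layers $1,\dots,k-1$ as noise; this yields the per-layer rate $\log\frac{1+h_k\sum_{j=1}^{k}\alpha_j P}{1+h_k\sum_{j=1}^{k-1}\alpha_j P}$ for layer $k$ at user $k$. By degradedness, any user $i \leq k$ has $h_i \geq h_k$ and can therefore also decode layer $k$, so every user in a set $\Jc$ with $\max\Jc = k$ recovers that layer's aggregate codeword and in particular its intended message $M_{\Jc}$; messages residing in layers $\max\Jc < k$ are never requested by user $k$ and are simply treated as noise. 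Checking that the layer rates $\rho_k$ meet these single-layer constraints reproduces \eqref{eq:9}, with the first inequality being the special case $k=1$ in which the noise term is empty.

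For the converse I would first enlarge each user's knowledge by a genie, handing user $k$ all messages whose weakest receiver is strictly weaker than $k$, i.e. $\{M_{\Jc}:\max\Jc > k\}$. This can only increase the capability of user $k$, so the resulting region is an outer bound, and the genie renders the recovered message sets nested across users, $T_1 \supseteq T_2 \supseteq \dots \supseteq T_K$ with $T_k = \{M_{\Jc}:\max\Jc \geq k\}$ and $T_k \setminus T_{k+1}$ of rate $\rho_k$. Then, writing $\hat M_k$ for the layer-$k$ messages and $n = T_{\rm slot}$, Fano's inequality gives $n\rho_k \leq I(\hat M_k; \yv_k^{n} \cond T_{k+1}) + n\epsilon_n$. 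Expanding this mutual information into a difference of conditional differential entropies and invoking the entropy power inequality along the degraded chain $X \leftrightarrow Y_1 \leftrightarrow \dots \leftrightarrow Y_K$ produces the telescoping $\log$-ratios of \eqref{eq:9}, with the power-split variables $\{\alpha_k\}$ identified from the per-layer input covariances and automatically satisfying $\sum_k \alpha_k \leq 1$ by the total power constraint $\EE[\Vert \xv\Vert^2] \leq P\,T_{\rm slot}$.

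The main obstacle is the converse, and within it the entropy-power-inequality step: one must bound the difference of successive conditional differential entropies so that it matches the Gaussian superposition expression, and simultaneously verify that the induced parameters $\{\alpha_k\}$ are consistent (nonnegative and summing to at most one) across all $K$ inequalities at once. This is where degradedness is essential and where the multi-user generalization of the Bergmans argument must be handled carefully; by contrast, the achievability and the message-aggregation reduction are comparatively routine once the layering is in place. I would lean on the classical degraded Gaussian broadcast channel result, e.g. \cite{el2011network}, for the single-letterization and the entropy-power-inequality details, applying it to the aggregated rates $\{\rho_k\}$ rather than re-deriving it from scratch.
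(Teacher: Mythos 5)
Your proposal is correct and takes essentially the same route as the paper's own proof: the paper likewise aggregates the messages into layers $\tilde{M}_k=\{M_{\Jc}:\max\Jc=k\}$, proves achievability by $K$-layer Gaussian superposition coding with successive cancellation, and proves the converse via Fano's inequality with exactly your genie conditioning on the higher layers followed by the Bergmans-style entropy-power-inequality recursion that identifies the power splits $\{\alpha_k\}$. The only cosmetic difference is that the paper writes out the EPI converse explicitly (for $K=3$, noting the general case follows readily) rather than invoking the classical private-message degraded-BC result as a black box on the aggregated rates.
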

\begin{proof}
 Please refer to Appendix \ref{appendix:BC_capacity_proof} for the proof.
\end{proof}

The achievability builds on superposition coding at the transmitter and successive interference cancellation at receivers. For $K=3$, the transmit signal is simply given by 
\[x = x_1 + x_2 + x_3 + x_{12} + x_{13}+ x_{123}
\]
where $\{x_{\Jc}\}$ are mutually independent Gaussian distributed random variables satisfying the power constraint and $x_{\Jc}$ denotes the signal corresponding to the message $M_{\Jc}$ intended to the subset $\Jc\subseteq \{1,2,3\}$.   
User 3 (the weakest user) decodes $\tilde{M}_3 =\{M_{3}, M_{13}, M_{23}, M_{123}\}$ by treating all the other messages as noise. User 2 decodes first the messages $\tilde{M}_3$ and then jointly decodes $\tilde{M}_2 =\{M_2, M_{12}\}$. Finally, user 1 (the strongest user) successively decodes $\tilde{M}_3, \tilde{M}_2$ and, finally, $M_1$. 

Later in our online coded caching scheme we will need the capacity region $\Gamma(\hv)$, and more specifically, we will need to characterize its boundary. 
To this end,   

it suffices to  consider the weighted sum rate maximization:
\begin{align}\label{eq:generalWSR}
\max_{\rv \in \Gamma(\hv)}\sum_{\Jc: \Jc \subseteq \{1,\dots, K\}} \theta_{\Jc} r_{\Jc}.
\end{align}
We first simplify the problem using the following theorem.

\begin{theorem}
The weighted sum rate maximization with $2^K-1$ variables in \eqref{eq:generalWSR} reduces to a simpler problem with $K$ variables, given by 
\begin{align}\label{eq:powerallocation}
f(\alphav) = \sum_{k=1}^K \tilde{\theta}_k \log\frac{1+h_k \sum_{j=1}^{k} \alpha_j P}{1+ h_k\sum_{j=1}^{k-1} \alpha_j P}. 
\end{align}
where $\tilde{\theta}_k$ denotes the largest weight for user $k$ 
\[
\tilde{\theta}_k=\max_{\Kc: k\in \Kc \subseteq \{1,\dots, k\}}\theta_{\Kc}.
\]
\end{theorem}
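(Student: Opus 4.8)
The plan is to exploit the structure of the capacity region $\Gamma(\hv)$ established in Theorem~\ref{th:capacity_BC}: it is the union, over all power splits $\alphav$ in the simplex $\{\alphav : \alpha_k \ge 0,\ \sum_{k=1}^K \alpha_k \le 1\}$, of the polytopes $\Pc(\alphav)$ cut out by the rate inequalities of that theorem. Since maximizing a linear functional over a union of sets equals the maximum of the maxima over the individual sets, I would first write
\begin{align}
\max_{\rv \in \Gamma(\hv)} \sum_{\Jc} \theta_{\Jc} r_{\Jc} = \max_{\alphav}\ \max_{\rv \in \Pc(\alphav)} \sum_{\Jc} \theta_{\Jc} r_{\Jc},
\end{align}
and then solve the inner problem in closed form for each fixed $\alphav$. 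Note that no convexity of $\Gamma(\hv)$ is required, because the outer step uses only that the maximum distributes over the union.

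The key structural observation, and the crux of the argument, is that the $2^K-1$ rate constraints partition the messages. A message $M_{\Jc}$ enters the constraint indexed by user $k$ precisely when $k \in \Jc \subseteq \{1,\dots,k\}$, i.e.\ exactly when $k = \max \Jc$. I would verify that each $\Jc$ therefore contributes to one and only one constraint: for $k' < \max\Jc$ the containment $\Jc \subseteq \{1,\dots,k'\}$ fails, and for $k' > \max\Jc$ the membership $k' \in \Jc$ fails. Consequently, for fixed $\alphav$ the feasible region decouples across the groups $\Gc_k \defeq \{\Jc : \max\Jc = k\}$. Writing $B_k(\alphav)$ for the right-hand side of the $k$-th inequality in \eqref{eq:9}, the inner problem splits into $K$ independent linear programs of the form $\max\{\sum_{\Jc\in\Gc_k} \theta_{\Jc} r_{\Jc} : r_{\Jc}\ge 0,\ \sum_{\Jc\in\Gc_k} r_{\Jc} \le B_k(\alphav)\}$.

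Each such program is maximized by placing the entire budget $B_k(\alphav)$ on the message with the largest weight in its group, yielding optimal value $\tilde{\theta}_k B_k(\alphav)$ with $\tilde{\theta}_k = \max_{\Jc\in\Gc_k}\theta_{\Jc} = \max_{\Kc:\, k\in\Kc\subseteq\{1,\dots,k\}}\theta_{\Kc}$; setting the remaining rates in the group to zero is feasible precisely because the constraints are decoupled across groups. Summing over $k$ and substituting the explicit $B_k(\alphav)$ collapses the inner maximum to exactly $f(\alphav)$, so the original $(2^K-1)$-variable problem reduces to $\max_{\alphav} f(\alphav)$ over the power simplex, as claimed.

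I expect the only genuine obstacle to be the bookkeeping in the partition step: one must check carefully that the index set of the $k$-th constraint is exactly $\Gc_k = \{\Jc : \max\Jc = k\}$ and that the $\Gc_k$ are disjoint and exhaustive, since the entire reduction rests on these per-group budgets being independent. Once this decoupling is in place, the per-group linear programs and the substitution of $B_k(\alphav)$ are routine, and the case $k=1$ is seen to be consistent with the general formula by noting $\Gc_1 = \{\{1\}\}$ and $B_1(\alphav) = \log(1+h_1\alpha_1 P)$.
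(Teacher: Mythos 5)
Your proof is correct and takes essentially the same approach as the paper's: fix the power split $\alphav$, note that each message $M_{\Jc}$ appears only in the constraint of user $k=\max\Jc$ so the region decouples into per-user simplices with budgets $B_k(\alphav)$, and assign each budget entirely to the heaviest weight $\tilde{\theta}_k$. You are merely more explicit than the paper about the partition bookkeeping and the max-over-union step, but the substance is identical.
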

\begin{proof}
The proof builds on the simple structure of the capacity region. We first remark that for a given power allocation of other users, user $k$ sees $2^{k-1}$ messages $\{W_{\Jc}\}$ for all $\Jc$ such that $k\in \Jc \subseteq \{1,\dots, k\}$ with the equal channel gain. For a given set of $\{\alpha_j\}_{j=1}^{k-1}$, the capacity region of these messages is a simple hyperplane 
characterized by $2^{k-1}$ vertices $C_k \ev_i$ for $i=1, \dots, 2^{k-1}$, where $C_k$ is the sum rate of user $k$ in the RHS of \eqref{eq:9} and $\ev_i$ is a vector with one for the $i$-th entry and zero for the others. Therefore, the weighted sum rate seen is maximized for user $k$ by selecting the vertex corresponding to the largest weight, denoted by $\tilde{\theta}$. This holds for any $k$. 
\end{proof}

We provide an efficient algorithm to solve this power allocation problem as a special case of the parallel Gaussian broadcast channel studied in \cite[Theorem 3.2]{TseOptimal}. Following \cite{TseOptimal}, we define the rate utility function for user $k$ given by 
\begin{align}
u_k(z)= \frac{\tilde{\theta}_k}{1/h_k+z}-\lambda
\end{align}
where $\lambda$ is a Lagrangian multiplier. The optimal solution corresponds to selecting 
the user with the maximum rate utility at each $z$ and the resulting power allocation for user $k$ is 
\begin{align}\label{eq:optimalalpha}
 \alpha^*_k = \left\{ z:   [\max_j u_j(z) ]_+ = u_k(z)  \right \}/P
\end{align}
with $\lambda$ satisfying 
\begin{align} \label{eq:195}
P=\left [ \max_k \frac{\tilde{\theta}_k}{\lambda} -\frac{1}{h_k} \right]_+.
\end{align}

%%%%%%%%%%%%%%%%%%%%%%%%%%%%%%%%%%%%%%%%%%%%%%%%%%%%%%%%%%%%%%%%%%%%
\section{Proposed Online Delivery Scheme}
This section presents first the queued delivery network and its feasible rate region of arrival rates, then describes the proposed
control policy. 

\subsection{Solution plan}

At each time slot $t$, the controller \emph{admits} $a_k(t)$ files to be delivered to user $k$, and hence $a_k(t)$ is a control variable.\footnote{We note that random file arrivals can be directly captured  with the addition of an extra queue \cite{neely10}, which  we avoid  to simplify exposition.}
As our model dictates, the  succession of requested files for user $k$ is determined uniformly at random.

\textbf{Queueing model.}
The base station organizes the information into the following types of queues: 
\begin{enumerate}
	\item \textbf{User queues}  to store admitted files, one for each user. The buffer size of queue $k$ is denoted by $S_k(t)$ and expressed in number of files. 
	\item \textbf{Codeword queues} to store codewords to be multicast. There is one codeword queue for each subset of users $\Jc \subseteq \{1, \dots, K\}$. The size of codeword queue $\Jc$ is denoted by $Q_{\Jc}(t)$ and expressed in bits.
\end{enumerate} 
A queueing policy $\pi$ performs the following operations: (i) decides how many files to admit into  the user queues $S_k(t)$ in the form of $(a_k(t))$ variables, (ii) then  it decides how to combine together files from different user queues to be encoded into the form of multiple codewords which represent the required broadcast transmissions for the reception of this file--these codewords are stored in the appropriate codeword queues $Q_{\Jc}(t)$, (iii) and last it decides the encoding function $f_t$.
(ii) and (iii) are further clarified in the next section.

\begin{definition}[Stability]
	A queue $S(t)$ is said to be (strongly) stable if
	\vspace{-0.1in}
	\[
	\limsup\limits_{T\rightarrow\infty}\frac{1}{T}\sum_{t=0}^{T-1}\mathbb{E}\left[S(t)\right] < \infty.
	\]
	A queueing system is said to be stable if all its queues are stable. Moreover, the stability region of a system is the set of all arrival rates such that the system is stable.   
\end{definition}
The above definition implies that the average delay of each job in the queue is finite. 

In our problem,  if we develop a policy that keeps {\it user queues} $\Sm(t)$ stable, then all admitted files will, at some point, be combined into codewords. If in addition {\it codeword queues} $\Qm(t)$ are stable, then all generated codewords will reach their destinations, meaning that all receivers will be able to decode the admitted files that they requested. 
\begin{lemma}\label{lem:equivalence}
The region of all feasible delivery rates $\Lambda$ is the same as the stability region of the system (i.e. the set of all demand arrival rates for which there exists a policy that stabilizes the queueing system). 
\end{lemma}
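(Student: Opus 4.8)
The plan is to prove the set equality by establishing the two inclusions separately, using a standard flow-conservation argument on the two-layer queueing network. Throughout I read the region in the lemma as the code-constrained feasibility region $\Lambda^{CC}$ induced by the admissible class $\Pi^{CC}$, so that every admitted file for user $k$ is turned, via the decentralized placement of \cite{maddah2013decentralized}, into a deterministic set of codeword bits: one codeword $V_{\Jc}$ of size $m^{|\Jc|-1}(1-m)^{K-|\Jc|+1}F$ bits enters each codeword queue $Q_{\Jc}$ with $\Jc \ni k$, as dictated by \eqref{eq:LLN}. This fixes the linear map from the user-queue arrival process into the codeword-queue arrival process, and the codeword queues are then drained by the wireless transmissions whose instantaneous rate tuple is constrained to lie in $\Gamma(\hv)$ of Theorem \ref{th:capacity_BC}.

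First inclusion (stability $\Rightarrow$ feasibility). Suppose an admission rate vector $\rv$ lies in the stability region, so some policy keeps all $S_k(t)$ and all $Q_{\Jc}(t)$ strongly stable when files are admitted to user $k$ at long-run rate $r_k$. Strong stability gives $\EE[S_k(t)]/t \to 0$ and $\EE[Q_{\Jc}(t)]/t \to 0$, and a telescoping (rate-conservation) argument then forces the time-average departure rate of each queue to equal its time-average arrival rate. Propagating this equality through the two layers --- admitted files leave $S_k$ exactly as fast as they arrive, and the induced codeword bits leave each $Q_{\Jc}$ exactly as fast as they are produced --- shows that, for each user, the codewords needed to decode its files are transmitted at the admission rate. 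Hence $\limsup_t C_k(t)/t = r_k$ and $\rv$ is feasible.

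Second inclusion (feasibility $\Rightarrow$ stability), the harder direction. Given $\rv \in \Lambda^{CC}$ achieved by some admissible $([\phi_k],[f_t],[\xi_k])$ with $\limsup_t C_k(t)/t = r_k$, I would show that every strictly interior point $\rv - \epsilon \onev$, $\epsilon>0$, is stabilizable, and then take closures. The route is to extract from the achieving scheme the time-averaged transmission rate it devotes to each message set $\Jc$ under each channel state, and to average these over the i.i.d. block-fading law; this produces a single rate tuple that both (a) clears the codeword-queue arrivals induced by admission rate $\rv$ and (b) is feasible for the averaged broadcast channel, i.e. expressible as a convex combination over channel states of points in $\Gamma(\hv)$. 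This certifies a stationary randomized policy whose mean service rates strictly dominate the mean arrivals at both queueing layers, and strong stability then follows from the standard quadratic-Lyapunov-drift criterion on $(\Sm,\Qm)$ (negative drift outside a bounded set).

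The main obstacle is this second inclusion, specifically the passage from an arbitrary causal encoder $f_t$ --- which may act on the entire demand and channel history and need not respect the per-subset queue decomposition --- to a stationary, queue-respecting service process with well-defined average rates. I would handle this by a frame/subsequence argument: along the subsequence achieving the $\limsup$, partition time into long frames, average the empirical per-state transmission allocations, and appeal to the compactness of $\Gamma(\hv)$ together with the ergodicity of the fading to pass to a limiting stationary allocation. Verifying that this limiting allocation simultaneously respects the fixed codeword-bit map of \eqref{eq:LLN} and lies in the averaged capacity region is the crux; once it is in hand, the Lyapunov-drift conclusion is routine.
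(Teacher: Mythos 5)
Your proposal is correct in outline but takes a genuinely different route from the paper, in both directions. For stability $\Rightarrow$ feasibility, the paper does not argue via strong stability and telescoping: it first reduces the whole lemma, using $D_k(t)\le A_k(t)$, to showing that every arrival vector in the interior of the stability region is delivered at exactly that rate; it then invokes Theorem 4.5 of \cite{neely10} to produce a stationary randomized stabilizing policy, notes that under this policy $(\mathbf{S}(t),\mathbf{Q}(t),\mathbf{H}(t))$ is an ergodic discrete-time Markov chain, and uses regeneration at the all-queues-empty state: within each regeneration frame the admitted and delivered files coincide exactly (queues start and end empty), so a renewal--reward argument gives $\bar{\rv}=\bar{\av}$. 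Your rate-stability argument reaches the same conclusion and is somewhat more general---it applies to any stabilizing policy, not just the Markov one---but the paper's empty-queue regeneration times dispose for free of the bookkeeping you must handle asymptotically, namely that a file counts as delivered only once all of its codeword bits, spread over several queues $Q_{\Jc}$, have departed. For feasibility $\Rightarrow$ stability, be aware that the paper offers nothing beyond the one-line reduction $D_k(t)\le A_k(t)$: the crux you honestly flag---converting an arbitrary causal encoder into a stationary per-channel-state allocation lying in $\sum_{\hv}\pi_{\hv}\Gamma(\hv)$ and satisfying the flow constraints of Theorem \ref{th:feasibilityRegion}---is not resolved there either. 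Your sketch for it (empirical per-state averages, compactness and convexity of $\Gamma(\hv)$, ergodicity of the fading, then a Foster--Lyapunov drift argument) is the natural way to close it, and writing it out would make the proof of Lemma \ref{lem:equivalence} more complete than the published version.
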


Let $\overline{a}_k = \limsup\limits_{t\rightarrow\infty}\frac{1}{t}\sum_{t=0}^{t-1}\mathbb{E}\left[a_k(t)\right],$
denote the time average number of admitted files for user $k$.  Lemma \ref{lem:equivalence} implies the following Corollary. 
\begin{corollary}\label{cor:equivalentOptimization}
Solving  \eqref{eq:problem} is equivalent to finding a policy $\pi$ such that 
\vspace{-0.2in}	
\begin{align}\label{eq:objective2}
\overline{\boldsymbol{a}}^{\pi} =& \argmax\sum_{k=1}^Kg_k(\overline{a}_k)\\ 
\text{s.t.}\quad & \text{the system is stable.} \notag
\end{align}
\end{corollary}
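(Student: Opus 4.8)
The plan is to read Corollary~\ref{cor:equivalentOptimization} as a restatement of Lemma~\ref{lem:equivalence} in optimization language. Both \eqref{eq:problem} and \eqref{eq:objective2} maximize the same separable concave utility $\sum_k g(\cdot)$ of a rate vector, so the corollary reduces to showing that the two feasible sets agree: a vector is an achievable delivery-rate vector $\overline{\boldsymbol r}\in\Lambda$ if and only if it is realizable as the admitted-file-rate vector $\overline{\boldsymbol a}$ of some policy $\pi$ that keeps all queues stable. I would prove the two inclusions separately and then invoke that identical objectives over identical domains have identical optimizers.

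For the forward inclusion, let $\overline{\boldsymbol r}^{*}$ solve \eqref{eq:problem}. Since $\overline{\boldsymbol r}^{*}\in\Lambda$, Lemma~\ref{lem:equivalence} supplies a policy $\pi$ that stabilizes both the user queues $\Sm(t)$ and the codeword queues $\Qm(t)$ while files are admitted at rates $\overline{\boldsymbol r}^{*}$. This $\pi$ is feasible for \eqref{eq:objective2} and realizes $\overline{\boldsymbol a}^{\pi}=\overline{\boldsymbol r}^{*}$, so the optimum of \eqref{eq:objective2} is no smaller than that of \eqref{eq:problem}.

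The reverse inclusion rests on flow conservation, which I expect to be the main obstacle. Let $\pi$ be any stabilizing policy with admission rates $\overline{a}_k$. Strong stability of $S_k(t)$ gives $\tfrac1T\mathbb{E}[S_k(T)-S_k(0)]\to0$, which forces the rate at which admitted files leave the user queue --- i.e.\ are parsed and XOR-combined into codewords --- to equal the arrival rate $\overline{a}_k$; strong stability of each $Q_{\Jc}(t)$ similarly equates the served bit rate of subset $\Jc$ to the rate at which bits are injected into $Q_{\Jc}$. Chaining these per-queue balances through the two tiers, and using that a file of user $k$ is counted in $C_k(t)$ exactly once all of its covering codewords $\{V_{\Jc}:\Jc\ni k\}$ have been transmitted and decoded, shows that no admitted file is stranded and that the defining $\limsup$ is a genuine limit with $\overline{r}_k=\overline{a}_k$. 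By Lemma~\ref{lem:equivalence} this common vector lies in $\Lambda$, so $\pi$ induces a point feasible for \eqref{eq:problem} with the same utility, giving the reverse inequality.

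The delicate part is making the principle ``stable $\Rightarrow$ input rate equals output rate'' rigorous simultaneously across both queueing tiers and through the coded-caching encoding, so that admission, codeword generation, transmission, and decoding rates all coincide in the limit rather than merely being ordered by inequalities. Once this is in place, equality of the two optimal values and the correspondence of optimizers follow immediately, since the two programs then share both feasible region $\Lambda$ and objective $\sum_k g(\cdot)$.
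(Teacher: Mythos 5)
Your proposal is correct and shares the paper's overall skeleton (the corollary is Lemma~\ref{lem:equivalence} recast in optimization language, with the two problems sharing objective and feasible set), but your key technical step is carried out by a genuinely different argument. Where you establish ``delivery rate $=$ admission rate'' for an \emph{arbitrary} stabilizing policy by chaining per-queue flow conservation across the two tiers (strong stability of each $S_k(t)$ and $Q_{\Jc}(t)$ forcing throughput to equal input, then tracking when all covering codewords of a file have been served), the paper instead (i) disposes of one direction with the trivial pathwise inequality $D_k(t)\leq A_k(t)$, and (ii) for the other direction constructs a \emph{specific} stationary randomized policy (via Theorem 4.5 of \cite{neely10}) whose induced chain $(\Sm(t),\Qm(t),\hv(t))$ is an ergodic Markov chain, and then uses a regenerative argument: the queues empty infinitely often, and between consecutive emptying times the arrived and delivered files coincide exactly, so the long-run rates are equal. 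The paper's route buys exactness cheaply --- cycle-wise conservation sidesteps the delicate cross-tier chaining you correctly flag as the hard part --- at the price of applying only to the constructed policy; your route is policy-agnostic (which is, strictly speaking, what the reverse inclusion calls for), but to be rigorous it needs the standard ``strong stability plus deterministically bounded per-slot increments implies rate stability'' machinery, since strong stability alone (a bounded time-averaged mean backlog) does not by itself give $\mathbb{E}\left[S_k(T)\right]/T\to 0$; here the bounds $a_k(t)\leq\gamma_{k,max}$, $\sigma_{\Jc}(t)\leq\sigma_{max}$ and the power-limited transmission rates supply that. One caveat you share with the paper but do not acknowledge: the optimizer of \eqref{eq:problem} typically lies on the boundary of $\Lambda^{CC}$, where a strongly stabilizing policy need not exist; the paper explicitly restricts its construction to the interior and treats boundary points as exceptional (approachable via $\delta$-interior points), and your forward inclusion should carry the same qualification.
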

\begin{proof}
See Appendix \ref{appendix:equivalence_proof}
\end{proof}
	
\subsection{Feasible Region}
Contrary to the offline coded caching in \cite{maddah2013decentralized}, we propose an online delivery scheme consisting of the following three blocks. Each block is operated at each slot. 
\begin{enumerate}

\item \textbf{Admission control:}
At the beginning of each slot, the controller decides how many requests for each user, $a_k(t)$ should be pulled into the system from the infinite reservoir.

\item \textbf{Routing:}
The cumulative accepted files for user $k$ are stored in the admitted demand queue whose size is given by $S_k(t)$ for $k=1,\dots, K$. The server decides the combinations of files to perform coded caching. The decision at slot $t$ for a subset of users $\Jc\subseteq\{1,..,K\}$, denoted by $\sigma_{\Jc}(t)\in\{0,1,\dots,\sigma_{\max}\}$, refers to the number of combined requests for this subset of users. It is worth noticing that offline coded caching lets $\sigma_{\Jc} = 1$ for $\Jc =\{1,\dots, K\}$ and zero for all the other subsets.  
The size of the queue $S_k$ evolves as: 
\begin{align}
S_k(t+1) = \left[S_k(t) - \sum_{\Jc: k\in \Jc}\sigma_{\Jc}(t)\right]^+ + a_k(t)  \label{eq:codewordQueues}
\end{align}
If $\sigma_{\Jc}(t)> 0$, the server creates codewords by applying offline coded caching explained in Section [{\bf}] for this subset of users 
as a function of the cache contents $\{Z_j: j\in \Jc\}$. 
  
\item \textbf{Scheduling:}
  The codewords intended to the subset $\Jc$ of users are stored in codeword queue whose size is given by $Q_{\Ic}(t)$ for $\Ic\subseteq \{1,\dots, K\}$. Given the instantaneous channel realization $\hv(t)$ and the queue state $\{Q_{\Ic}(t)\}$, the server performs scheduling and rate allocation. Namely, at slot $t$, it determines the number $\mu_{\Ic}(t)$ of bits per channel use to be transmitted for the users in subset $\Ic$. By letting $b_{\Jc,\Ic}$ denote the number of bits generated for codeword queue $\Ic\subseteq \Jc$ when offline coded caching
  is performed to the users in $\Jc$, codeword queue $\Ic$ evolves as 
   \begin{align}\nonumber
   Q_{\Ic}(t+1) =  \left[Q_{\Ic}(t) - T_{\rm slot}\mu_{\Ic}(t)\right]^+  + \sum_{\Jc:\Ic\subseteq\Jc}b_{\Jc,\Ic}\sigma_{\Jc}(t)
   \end{align}
   where $b_{\Jc,\Ic}=m^{|\Ic|}(1-m)^{|\Jc|-|\Ic|-1}$.
 \end{enumerate}  

In order determine our proposed policy, namely the set of decisions $\{\av(t), \sigmav(t),\muv(t)\}$ at each slot $t$,  we first characterize the feasible region $\Lambda$ as a set of arrival rates $\av$. We let $\pi_{\mathbf{h}}$ denote the probability that the channel state at slot $t$ is $\hv\in \Hc$ where $\Hc$ is the set of all possible channel states. We let $\Gamma(\hv)$ denote the capacity region for a fixed channel state $\mathbf{h}$. Then we have the following 
\begin{theorem}[Feasibility region $\Lambda^{CC}$]\label{th:feasibilityRegion}
	A demand rate vector is feasible, i.e. $\bar{\av} \in \Lambda^{CC}$, if and only if there exist $\muv \in \sum_{\mathbf{h}\in\mathcal{H}}\pi_{\mathbf{h}}\Gamma(\mathbf{h})$, $\bar{\sigma}_{\Ic} \in[0,\sigma_{max}], \forall \Ic\subseteq \{1,\dots, K\}$ such that: 
	\begin{align}\label{eq:all_packets_get_combined}
	\sum_{\Jc: k\in \Jc}\bar{\sigma}_{\Jc} \geq \bar{a}_k, \forall k =1,\dots,K \\ \label{eq:all_codewords_get_transmitted} 
	T_{slot}\mu_{\Ic} \geq \sum_{\Jc: \Ic \subseteq \Jc} b_{\Jc, \Ic}\bar{\sigma}_{\Jc}, 
	\forall \Ic\in 2^{\Kc}
.	\end{align}
\end{theorem}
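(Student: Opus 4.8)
The plan is to prove the ``if and only if'' as two separate inclusions, viewing the system as a two-hop network: files enter the user queues $S_k$ at rate $\bar{a}_k$, the routing variables $\sigma_{\Jc}$ move them into the codeword queues $Q_{\Ic}$ (with multiplicity $b_{\Jc,\Ic}$), and the scheduling rates $\mu_{\Ic}$ drain the codeword queues over the fading broadcast channel. Constraints \eqref{eq:all_packets_get_combined} and \eqref{eq:all_codewords_get_transmitted} are precisely the flow-conservation conditions at the two stages, so I would follow the standard template for throughput-region characterizations in stochastic network optimization, the only channel-specific ingredient being the averaged capacity region $\sum_{\hv\in\Hc}\pi_{\hv}\Gamma(\hv)$.

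For necessity, suppose $\bar{\av}\in\Lambda^{CC}$, so some admissible policy keeps every queue strongly stable. Applying $[x]^+\geq x$ to the user-queue recursion \eqref{eq:codewordQueues} and telescoping over $t=0,\dots,T-1$ gives
\[
\frac{S_k(T)-S_k(0)}{T}\ \geq\ \frac{1}{T}\sum_{t=0}^{T-1}\Big(a_k(t)-\sum_{\Jc:k\in\Jc}\sigma_{\Jc}(t)\Big).
\]
Taking expectations, letting $T\to\infty$, and invoking stability (which implies rate stability, forcing $\EE[S_k(T)]/T\to0$) yields $\sum_{\Jc:k\in\Jc}\bar{\sigma}_{\Jc}\geq\bar{a}_k$, i.e. \eqref{eq:all_packets_get_combined}. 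The identical manipulation on the codeword-queue recursion produces $T_{\rm slot}\bar{\mu}_{\Ic}\geq\sum_{\Jc:\Ic\subseteq\Jc}b_{\Jc,\Ic}\bar{\sigma}_{\Jc}$, i.e. \eqref{eq:all_codewords_get_transmitted}. It then remains to place $\bar{\muv}$ in $\sum_{\hv\in\Hc}\pi_{\hv}\Gamma(\hv)$: since $\muv(t)\in\Gamma(\hv(t))$ in every slot and the channel is i.i.d. with law $\pi_{\hv}$, I would condition the empirical service average on the channel state and use the convexity of each per-state region $\Gamma(\hv)$ from Theorem \ref{th:capacity_BC} together with the strong law of large numbers to express the limit of $\frac1T\sum_t\muv(t)$ as a convex combination $\sum_{\hv\in\Hc}\pi_{\hv}\muv_{\hv}$ with $\muv_{\hv}\in\Gamma(\hv)$.

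For sufficiency, given averages satisfying the constraints I would exhibit a stationary randomized policy and show it stabilizes the system. Concretely, admit at the constant mean rate $\bar{a}_k$, fix the routing decision to the mean $\bar{\sigma}_{\Jc}$, and, for each realized state $\hv$, transmit at the rate point $\muv_{\hv}\in\Gamma(\hv)$ whose $\pi$-average equals the prescribed $\bar{\muv}$. Under this policy the expected one-slot arrivals never exceed the expected one-slot service at either stage, so the drift of the quadratic Lyapunov function $L(t)=\sum_k S_k(t)^2+\sum_{\Ic}Q_{\Ic}(t)^2$ is non-positive, and is strictly negative outside a bounded region whenever the constraints hold strictly. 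A standard Foster--Lyapunov argument then delivers strong stability for every strictly-interior rate vector, and boundary points are recovered by a closure argument: any $\bar{\av}$ meeting the constraints with equality is the limit of strictly-feasible vectors (e.g. scaling the admitted rates by $1-\delta$), each lying in $\Lambda^{CC}$, which is closed.

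The main obstacle I anticipate is the rigorous identification of the averaged capacity region, namely showing that $\{\lim_T \frac1T\sum_t\muv(t):\muv(t)\in\Gamma(\hv(t))\}$ coincides exactly with $\sum_{\hv\in\Hc}\pi_{\hv}\Gamma(\hv)$; this requires the convexity and compactness of each $\Gamma(\hv)$ and a careful interchange of limits with conditional expectations over channel states. A secondary subtlety is the two-hop coupling: the single variable $\bar{\sigma}_{\Jc}$ must simultaneously drain the user queues and feed the codeword queues through the weights $b_{\Jc,\Ic}$, so the stationary policy in the sufficiency direction has to realize one consistent $\bar{\sigmav}$ serving both stages at once. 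This is handled by the joint Lyapunov function above, whose drift decomposes additively across the two queue families so that the per-stage negative-drift estimates combine directly.
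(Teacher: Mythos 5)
Your proposal is correct and is essentially the paper's own (implicit) argument: the paper never spells out a standalone proof of this theorem, but the machinery it relies on --- the stationary-randomized-policy characterization of stability regions (invoked as \cite[Theorem 4.5]{neely10} in Appendix \ref{appendix:equivalence_proof}) and the static policies of Lemmas \ref{lem:StaticOptimalPolicy} and \ref{lem:StablePolicy}, which are built from exactly the constraints \eqref{eq:all_packets_get_combined}--\eqref{eq:all_codewords_get_transmitted} --- is precisely your two-direction template of telescoped queue recursions plus mean rate stability for necessity, and a stationary randomized policy with a Foster--Lyapunov drift for sufficiency. Your one soft spot, the boundary case (closedness of $\Lambda^{CC}$ is asserted rather than proved, so as written that step is circular), is no less rigorous than the paper's own treatment, which sets boundary rates aside as ``exceptional cases'' in the proof of Lemma \ref{lem:equivalence}.
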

Constraint \eqref{eq:all_packets_get_combined} says that the service rate at which admitted demands are combined to form codewords is greater than the arrival rate, while \eqref{eq:all_codewords_get_transmitted} implies that the long-term average transmission rate $\overline{\mu}_{\Ic}$ for the subset $\Ic$ of users should be higher than the rate at which bits of generated codewords for this group arrive. In terms of the queueing system defined, these constraints impose that the service rates of each queue should be greater than their arrival rates, thus rendering them stable.   

Theorem \ref{th:feasibilityRegion} implies that the set of feasible average delivery rates is a convex set. 

\subsection{Admission Control and Routing}

In order to perform the utility maximization \eqref{eq:objective2}, we need to introduce one more set of queues. These queues are virtual, in the sense that they do not hold actual file demands or bits, but are merely counters to drive the control policy. Each user $k$ is associated with a queue $U_k(t)$ which evolves as follows: 
\begin{align}
U_k(t+1) = \left[U_k(t) - a_k(t)\right]^+ + \gamma_k(t)
\end{align}

where $\gamma_k(t)$ represents the arrival process to the virtual queue and is given by 
\begin{align}
	\gamma_k(t) = \arg\max\limits_{0\leq x\leq \gamma_{k,max}}\left[Vg_k(x) - U_k(t)x\right]
\end{align}
In the above, $V>0$ is a parameter that controls the utility-delay tradeoff achieved by the algorithm (see Theorem \ref{th:optimality_infinite}). 

The general intuition here is as follows: Observe that the number $a_k(t)$ of admitted demands is the service rate for the virtual queues $U_k(t)$.  The control algorithm actually seeks to optimize the time average of the virtual arrivals $\gamma_k(t)$. However, since $U_k(t)$ is stable, its service rate, which is the actual admission rate, will be greater than the rate of the virtual arrivals, therefore giving the same optimizer. Stability of all other queues will guarantee that admitted files will be actually delivered to the users. 

We present our on-off policy for admission control and routing.  For every user $k$, admission control chooses $a_k(t)$ demands given by 
\begin{align}
	a_k(t) = \gamma_{k, max} \onev\{U_k(t) \geq S_k(t) \}
\end{align}

For every subset $\Jc \subseteq \{1,\dots, K\}$, routing combines $\sigma_{\Jc}(t)$ demands of users in $\Jc$ given by
\begin{align}
\sigma_{\Jc}(t) = \sigma_{max} \onev\left\{ \sum_{k\in\Jc}S_k(t) > \sum_{\Ic: \Ic \subseteq \Jc}\frac{b_{\Jc, \Ic}}{F^2} Q_{\Ic}(t)  \right\}
.\end{align}

\subsection{Scheduling and Transmission}
In order to stabilize all {\it codeword queues}, the scheduling and resource allocation explicitly solve the following weighted sum rate 
maximization at each slot $t$ where the weight of the subset $\Jc$ corresponds to the queue length of $Q_{\Jc}$ 
\begin{align}\label{eq:WSRstability}
\muv(t) = \arg\max\limits_{\rv \in\Gamma(\hv(t))}\sum_{\Jc\subseteq \{1, \dots, K\}} Q_{\Jc}(t)r_{\Jc} %\eta_{\Jc}
.\end{align}
We propose to apply the power allocation algorithm in Section \ref{section:broadcasting} to solve the above problem by sorting users in a decreasing order of channel gains and treating $Q_{\Jc}(t)$ as $\theta_{\Jc}$.  
In adition, we assume that the number of channel uses in one coherence block is large enough such that the decoding error from choosing channel  codes with rate $\muv(t)$ is very small. In this case, no feedback from the receivers is given. 

\subsection{Example} 
We conclude this section by providing an example of our proposed online delivery network for $K=3$ users as illustrated in Fig. \ref{fig:queueing_system_real}.
\begin{figure}
\vspace{-10pt}
	\centering
	\includegraphics[scale=0.46]{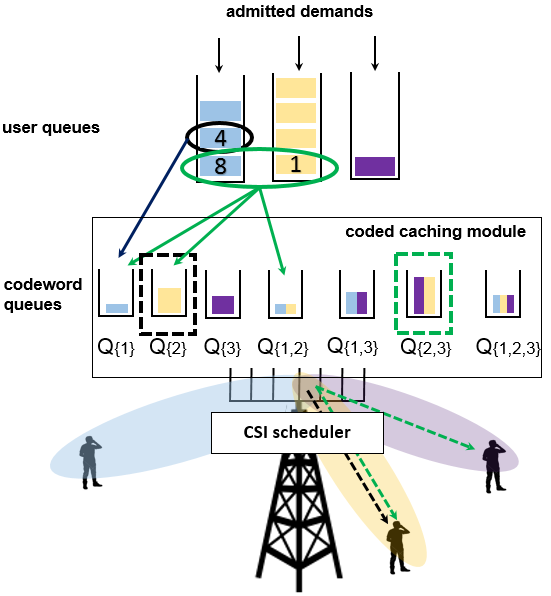}
	\vspace{-10pt}
	\caption{An example of the queueing model for a system with $3$ users. Dashed lines represent wireless transmissions, solid circles files to be combined and solid arrows codewords generated.}
	\label{fig:queueing_system_real}
	\vspace{-10pt}
\end{figure}
 At slot $t$ the server decides to combine $W_1$ requested by user $2$ with $W_8$ requested by user $2$ and  to process $W_4$ requested by user $1$ uncoded. Therefore $\sigma_{\{1,2\}}(t)=\sigma_{\{1\}}(t)=1$ and $\sigma_{\Jc}(t)=0$ otherwise. Given this codeword construction, codeword queues have inputs as described in Table I.
\begin{table}[ht]
\vspace{-10pt}
\caption{Codeword queues inputs.}
\vspace{-10pt}
\label{tab:1}
\begin{center}
\begin{tabular}{c|c}\hline
Queue & Input\\
\hline
$\Qc_{\{1\}}$ & $W_{8, \emptyset}$; $W_{8|3}$\\ 
              & $W_{4|\emptyset}$; $W_{4|\{2\}}$; $W_{4|\{3\}}$; $W_{4|\{2,3\}}$\\
\hline
$\Qc_{\{2\}}$ & $W_{1|\emptyset}$; $W_{1|\{3\}}$\\
\hline
$\Qc_{\{1,2\}}$ & $W_{1|\{1\}}\oplus W_{8|\{2\}}$; $W_{1|\{1,3\}}\oplus W_{8|\{2,3\}}$\\
\hline
\end{tabular}
\end{center}
\vspace{-10pt}
\end{table} 
In addition, data from queues $Q_{\{2\}}(t), Q_{\{2.3\}}(t)$ are transmitted.  

%%%%%%%%%%%%%%%%%%%%%%%%%%%%%%%%%%%%%%%%%%%%%%%%%%%%%%%%%%%%%%%%%%%%
\section{Performance Analysis}
In thi section, we present the main result of the paper, that our proposed online algorithm leads to close to optimal performance for all policies in the class $\Pi^{CC}$:

\begin{theorem}\label{th:optimality_infinite}
	Let $\bar{r}^{\pi}_k$ the mean time-average delivery rate for user $k$  achieved by the proposed policy. Then 
	\begin{align}\nonumber
	\sum_{k=1}^Kg_k(\bar{r}^{\pi}_k) \geq \max_{\bar{\rv} \in\Lambda^{CC}}\sum_{k=1}^Kg_k(\bar{r}_k) - \bigO{\frac{1}{V}} \\ \nonumber
\limsup\limits_{T\rightarrow \infty}\frac{1}{T}\sum_{t=0}^{T-1}\mathbb{E}\left\{\hat{Q}(t)\right\}  = \bigO{V},
\end{align}
where $\hat{Q}(t)$ is the sum of all queue lengths at the beginning of time slot $t$, thus a measure of the mean delay of file delivery. 
\end{theorem}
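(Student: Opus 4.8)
The plan is to prove both claims by the Lyapunov drift-plus-penalty method, specialized to the three-tier queueing structure (virtual queues $U_k$, user queues $S_k$, codeword queues $Q_{\Ic}$). First I would introduce the quadratic Lyapunov function
\[
L(t) = \frac{1}{2}\sum_{k=1}^K U_k(t)^2 + \frac{1}{2}\sum_{k=1}^K S_k(t)^2 + \frac{1}{2F^2}\sum_{\Ic\subseteq\{1,\dots,K\}} Q_{\Ic}(t)^2,
\]
where the $1/F^2$ weighting reconciles the bit units of the codeword queues with the file units of the user queues; this is exactly the normalization that already appears in the routing rule. Using the elementary bound $([x-b]^+ + a)^2 \le x^2 + a^2 + b^2 + 2x(a-b)$ on each queue update, together with the uniform bounds $a_k(t)\le\gamma_{k,max}$, $\sigma_{\Jc}(t)\le\sigma_{max}$ and a finite peak per-slot transmission rate over $\Gamma(\hv(t))$, the conditional one-slot drift $\Delta(t)=\EE[L(t+1)-L(t)\mid\text{queues}]$ is bounded by a constant $B$ plus linear ``weight$\,\times\,$net-flow'' terms, one for each queue.

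Second, I would subtract the penalty $V\,\EE[\sum_k g_k(\gamma_k(t))\mid\cdot]$ and regroup the bound by decision variable. The key point is that the regrouped expression splits additively into a $\gamma_k$-term, an $a_k$-term, a $\sigma_{\Jc}$-term and a $\mu_{\Ic}$-term, and that each of the four rules of the proposed policy is precisely the slot-by-slot minimizer of its term: the virtual-arrival rule maximizes $Vg_k(x)-U_k(t)x$; the on-off admission rule minimizes $a_k(t)\bigl(S_k(t)-U_k(t)\bigr)$; the routing rule minimizes the term whose sign flips at $\sum_{k\in\Jc}S_k(t)=\sum_{\Ic\subseteq\Jc}\tfrac{b_{\Jc,\Ic}}{F^2}Q_{\Ic}(t)$; and the scheduling rule \eqref{eq:WSRstability} minimizes $-\tfrac{T_{\rm slot}}{F^2}\sum_{\Ic}Q_{\Ic}(t)\mu_{\Ic}(t)$, i.e.\ solves the weighted sum-rate problem over $\Gamma(\hv(t))$. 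Hence the proposed policy minimizes the entire per-slot drift-plus-penalty bound over all feasible actions.

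Third, I would introduce the benchmark. By Theorem \ref{th:feasibilityRegion} and the convexity of $\Lambda^{CC}$, for every interior target rate and every slack $\epsilon>0$ there exists a stationary channel-state-only randomized policy whose expected arrivals and services satisfy \eqref{eq:all_packets_get_combined}--\eqref{eq:all_codewords_get_transmitted} with margin $\epsilon$; I take the target to be the optimizer $\bar\rv^*$ of \eqref{eq:problem}. Since the proposed policy dominates this benchmark in the per-slot bound, I obtain $\Delta(t)-V\EE[\sum_k g_k(\gamma_k(t))\mid\cdot]\le B - V\,G^\star - \epsilon\,\hat{Q}(t)$ with $G^\star=\max_{\bar\rv\in\Lambda^{CC}}\sum_k g_k(\bar r_k)$. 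Telescoping over $t=0,\dots,T-1$, using $L\ge0$ and dividing by $VT$, then letting $T\to\infty$ and applying Jensen's inequality (concavity of $g_k$) to pass from $\frac1T\sum_t\EE[g_k(\gamma_k(t))]$ to $g_k(\bar\gamma_k)$ (with $\epsilon\to0$ here), yields the utility bound. Mean-rate stability of $U_k$ gives $\bar\gamma_k\le\bar a_k$, while stability of $S_k$ and $Q_{\Ic}$ gives $\bar a_k=\bar r_k$, so monotonicity of $g_k$ produces $\sum_k g_k(\bar r_k^{\pi})\ge G^\star-\bigO{1/V}$. Retaining the $-\epsilon\,\hat{Q}(t)$ term instead, and using boundedness of $g_k$ on $[0,\gamma_{k,max}]$, gives $\frac1T\sum_t\EE[\hat{Q}(t)]\le (B+V\cdot\text{const})/\epsilon=\bigO{V}$.

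The step I expect to be the main obstacle is the third one: establishing the optimal stationary randomized benchmark and showing it lives in the exact action space the proposed policy optimizes over. This requires converting the rate-region characterization of Theorem \ref{th:feasibilityRegion} into a genuine per-slot randomized rule via a Caratheodory/measurable-selection argument over the channel states $\hv\in\Hc$, and verifying that its induced drift reproduces the linear terms of the bound with the claimed slack. The rest---the second-moment drift bound and the telescoping---is routine bookkeeping, the only delicate point being the consistent $1/F^2$ unit normalization between the file-valued and bit-valued queues.
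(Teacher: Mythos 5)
Your proposal follows essentially the same route as the paper's proof: the same Lyapunov function with the $1/F^2$ normalization, the same drift-plus-penalty bound minimized term-by-term by the four control decisions, channel-state-only randomized benchmarks justified via Theorem \ref{th:feasibilityRegion}, telescoping plus Jensen's inequality, and the final chain $\bar{\gamma}_k \le \bar{a}_k = \bar{r}_k^{\pi}$ obtained from stability of $U_k$, $S_k$ and $Q_{\Ic}$. The one structural difference is the benchmark construction: you use a single $\epsilon$-slack stationary policy ``targeting the optimizer $\bar{\rv}^*$,'' whereas the paper uses two distinct static policies --- an optimal one $\pi^*$ (Lemma \ref{lem:StaticOptimalPolicy}) sitting exactly on the boundary of $\Lambda^{CC}$ with no slack, used for the utility bound, and a $\delta$-interior one $\pi^{\delta}$ (Lemma \ref{lem:StablePolicy}) with slack $\epsilon$, used for strong stability and the $\bigO{V}$ queue bound. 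As written, your single inequality $\Delta(t)-V\,\EE\left[\textstyle\sum_k g_k(\gamma_k(t))\right]\le B - V G^{\star}-\epsilon\,\hat{Q}(t)$ cannot hold literally: since the $g_k$ are increasing, $\bar{\rv}^*$ lies on the boundary of $\Lambda^{CC}$, so no stationary policy can admit at rate $\bar{\rv}^*$ with uniform margin $\epsilon>0$. The repair is exactly what you gesture at with ``$\epsilon\to 0$'': target an interior point such as $\bar{\rv}^*-\delta\onev$, note that its utility is $G^{\star}-\smallO{1}$ by continuity of the $g_k$, take $\delta\to 0$ for the utility bound, and keep $\delta$ fixed for the queue bound. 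With that fix (or, equivalently, splitting into the paper's two benchmarks) your argument is complete; both routes deliver the same result, the paper's being marginally cleaner because the comparison against $\pi^*$ needs no limiting argument, while yours is more economical in that one parameterized benchmark serves both conclusions.
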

The above theorem states that, by tuning the constant $V$, the utility resulting from our online policy can be arbitrarily close to the optimal one, where there is a tradeoff between the guaranteed optimality gap $\bigO{1/V}$ and the upper bound on the total buffer length $\bigO{V}$.

For proving the Theorem, we use the Lyapunov function 
\[
L(t) = \frac{1}{2}\left(\sum_{k=1}^KU_k^2(t) + S_k^2(t) + \sum_{\Ic \in 2^{\Kc}}\frac{1}{F^2}Q_{\Ic}^2(t)\right)
\]
and specifically the related drift-plus-penalty quantity, defined as: $\mathbb{E}\left\{L(t+1) - L(t)| \mathbf{S}(t), \mathbf{Q}(t), \mathbf{U}(t)\right\} - V\mathbb{E}\left\{\sum_{k=1}^Kg(\gamma_k(t))|\mathbf{S}(t), \mathbf{Q}(t), \mathbf{U}(t) \right\}$. The proposed algorithm is such that it minimizes (a bound on) this quantity. The main idea is to use this fact in order to compare the evolution of the drift-plus-penalty under our policy and two "static" policies, that is policies that take random actions (admissions, demand combinations and wireless transmissions), drawn from a specific distribution, based only on the channel realizations (and knowledge of the channel statistics). We can prove from Theorem 4 that these policies can attain every feasible delivery rate. The first static policy is one such that it achieves the stability of the system for  an arrival rate vector $\av'$ such that $\av'+\mathbf{\delta} \in \partial\Lambda^{CC}$. Comparing with our policy, we deduce strong stability of all queues and the bounds on the queue lengths by using a Foster-Lyapunov type of criterion. In order to prove near-optimality, we consider a static policy that admits file requests at rates $\av^* = \arg\max_{\av}\sum_kg_k(a_k)$ and keeps the queues stable in a weaker sense (since the arrival rate is now in the boundary $\Lambda^{CC}$). By comparing the drift-plus-penalty quantities and using telescopic sums and Jensen's inequality on the time average utilities, we obtain the near-optimality of out proposed policy.  

The full proof is in Appendix \ref{appendix:performance_proof}.

%%%%%%%%%%%%%%%%%%%%%%%%%%%%%%%%%%%%%%%%%%%%%%%%%%%%%%%%%%%%%%%%%%%%
\section{Numerical Examples}

In this section, we compare our proposed delivery scheme with the following two other schemes, all building on decentralized cache placement in \eqref{eq:Zk} and \eqref{eq:LLN}. 

\begin{itemize}
\item \textbf{Unicast opportunistic
scheduling}: for any request, the server sends the remaining $(1-m)F$ bits to the corresponding user without combining any files. Here we only exploit the local caching gain. In each slot the serve sends with full power to   user
\vspace{-0.1in}
\[ k^{*}(t)=\arg\max_{k}\frac{\log\left( 1+h_k(t)P\right) }{ T_k(t) ^{\alpha}},
\]
 where $T_k(t)=\frac{\sum_{1\leq\tau\leq t-1} \mu_{k}(\tau)}{(t-1)}$ is the empirical average rate for user $k$ up to slot $t$. 

\item \textbf{Standard coded caching}: we use decentralized coded caching among all $K$ users. For the delivery, non-opportunistic TDMA transmission is used. The server sends sequentially codewords $V_{\Jc}$ to the subset of users $\Jc$ at the weakest user rate among $\Jc$: 
\vspace{-0.1in}
\[
\mu_{\Jc}(t)=\log\left( 1+P\min_{k\in\Jc}(h_k(t))\right) .
\]Once the server has sent codewords $\{V_{\Jc}\}_{\emptyset\neq\Jc\subseteq\{1,..,K\}}$, every user is able to decode one file. Then the process is repeated for all the demands. 
\end{itemize}

We consider the system with  normalized memory of $m=0.6$, power constraint $P=10dB$, file size $F=10^3$ bits and number of channel uses per slot $T_{\rm slot}=10^2$. We divide users into two classes of $K/2$ users each: strong users with $\beta_k=1$ and weak users with $\beta_k=0.2$.

We compare the three algorithms for the cases where the objective of the system is sum rate maximization ($\alpha=0$) and proportional fairness ($\alpha=1$). The results are depicted in Fig.~\ref{subfig-1:sumRates} and ~\ref{subfig-2:Utilities}, respectively.

   \begin{figure}
     \subfloat[\scriptsize Sum rate ($\alpha=0$)\label{subfig-1:sumRates}]{%
       \includegraphics[width=0.235\textwidth]{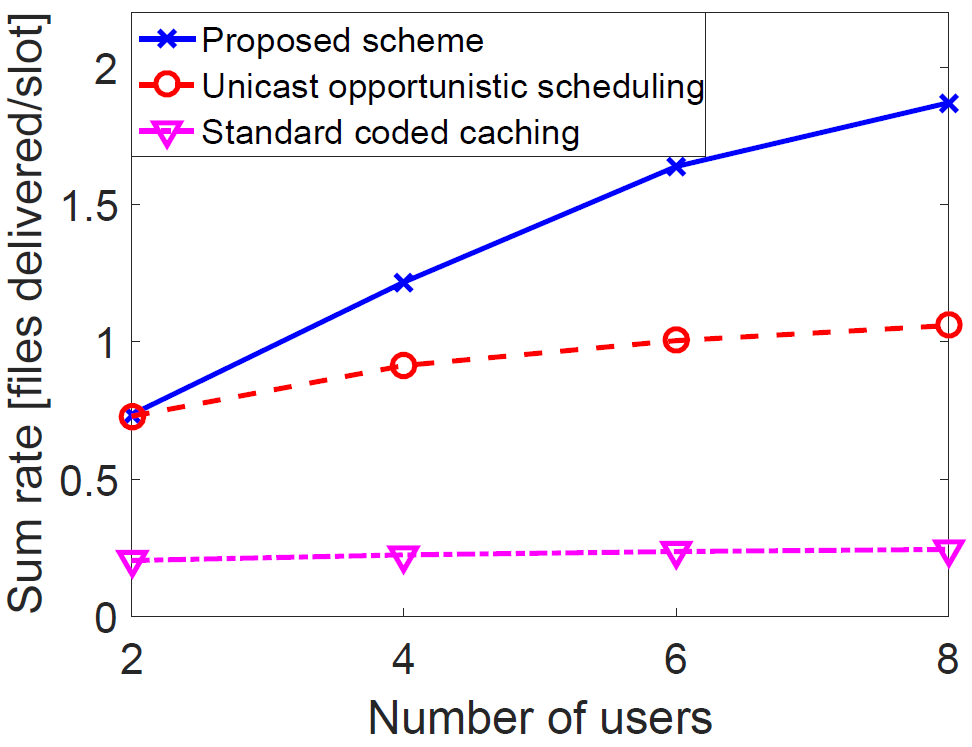}
     }
     \hfill
     \subfloat[\scriptsize Proportional fair utility ($\alpha=1$)\label{subfig-2:Utilities}]{%
       \includegraphics[width=0.235\textwidth]{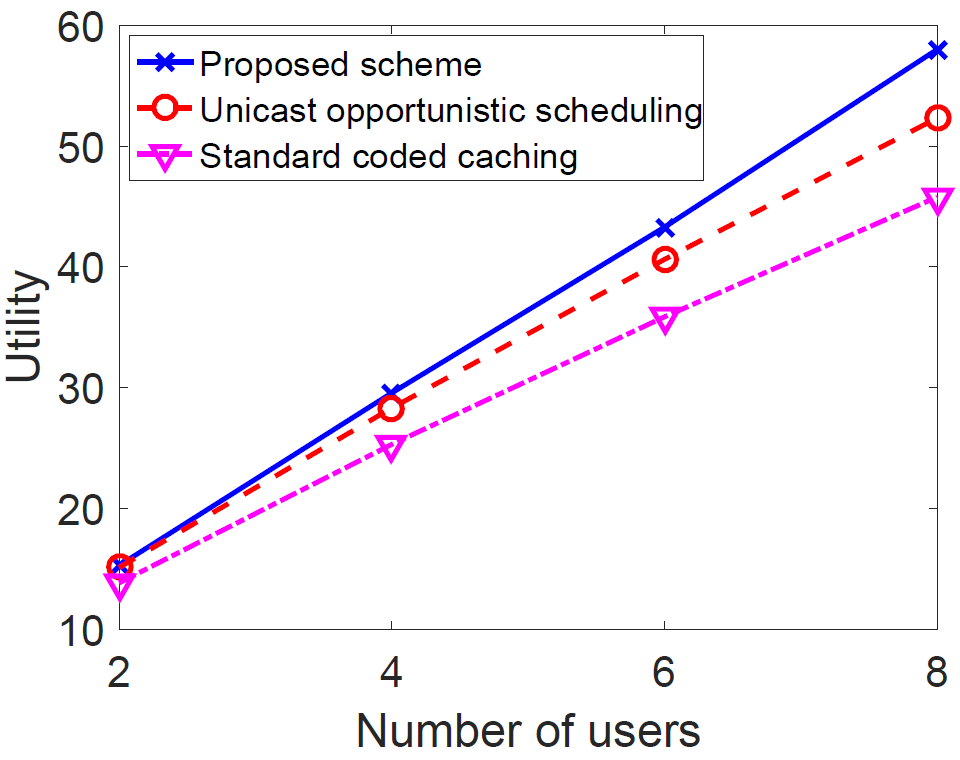}
     }
     \vspace{-2pt}
     
     \caption{Performance results vs number of users for $\alpha=0$ and $\alpha=1$}
     \label{fig:numerical_results}
     \vspace{-2pt}
   \end{figure}

Regarding the sum rate objective, standard coded caching performs very poorly, indicative of the adverse effect of users with bad channel quality. It is notable that our proposed scheme outperforms the unicast opportunistic scheme, which maximizes the sum rate if only private information packets are to be conveyed. The relative merit of our scheme increases as the number of users grows. This can be attributed to the fact that our scheme can exploit any available multicast opportunities. \emph{Our result here implies that, in realistic wireless systems, coded caching can indeed provide a significant throughput increase when an appropriate joint design of routing and opportunistic transmission is used.} 

Regarding the proportional fair objective, we can see that the average sum utility increases with a system dimension for three schemes although our proposed scheme provides a  gain compared to the two others. 

%%%%%%%%%%%%%%%%%%%%%%%%%%%%%%%%%%%%%%%%%%%%%%%%%%%%%%%%%%%%%%%%%%%%

\section{Conclusions}\label{sec:conclusions}
We provided an algorithm to solve the problem of ensuring fairness in the long term delivery rates in wireless systems employing decentralized coded caching. Our results imply that appropriately combining the  opportunism arising from the fading channels with the multicasting opportunities that arise from coded caching can mitigate the harmful impact of users with bad channel conditions in standard coded caching schemes and provide significant increase in the performance of the system. 

%%%%%%%%%%%%%%%%%%%%%%%%%%%%%%%%%%%%%%%%%%%%%%%%%%%%%%%%%%%%%%%%%%%%
%\section{Appendix: Proofs}
\appendices

\section{Proof of Theorem \ref{th:capacity_BC}}\label{appendix:BC_capacity_proof}

\subsection{Converse}\label{appendix:converse}

We provide the converse proof for $K=3$ and the general case $K>3$ follows readily.
Notice that the channel output of user $k$ in \eqref{eq:BFC} for $n$ channel use can be equivalently written as 
\begin{align}
	\yv_k=\xv+\tilde{\nuv}_k,
\end{align}
where $\tilde{\nuv}_k=\frac{\nuv_k}{\sqrt{(h_k)}}\sim\Nc_{\Cc}(0, N_k\Id_n)$  for $N_k=\frac{1}{h_k}$ and $\Id_n$ identity matrix of size $n$. Since $N_1\leq N_2\leq N_3$, we set $\tilde{M}_k=\cup_{k\in\Kc\subseteq[k]} M_{\Kc}$ the message that must be decoded by user $k$ (user $k$ decodes all bits that user $k'\geq k$ decodes) at rate $\tilde{R}_k$. More explicitly, $\tilde{M}_1=\{M_1\}$, $\tilde{M}_2=\{M_2,M_{12}\}$, $\tilde{M}_3=\{M_3,M_{13},M_{23},M_{123}\}$.
By Fano's inequality, we have 
\begin{align}\label{fano}
	\begin{cases}
		nH(\tilde{M}_1)&\leq I(\tilde{M}_1;Y_1\cond \tilde{M}_2,\tilde{M}_3)\\
		nH(\tilde{M}_2)&\leq I(\tilde{M}_2;Y_2\cond \tilde{M}_3)\\
		nH(\tilde{M}_3)&\leq I(\tilde{M}_3;Y_3).
	\end{cases}
\end{align}
Consider 
\begin{align}\label{eq171}
	I(\tilde{M}_3;Y_3)=H(Y_3)-H(Y_3\cond \tilde{M}_3).
\end{align}
Since $n\log\left( 2\pi eN_3\right)=H(Y_3\cond \tilde{M}_3,X) \leq H(Y_3\cond \tilde{M}_3)\leq H(Y_3)\leq n\log\left( 2\pi e(P+N_3)\right)$, there exist $0\leq \alpha_3\leq 1$ such that 
\begin{align}\label{eq172}
	H(Y_3\cond \tilde{M}_3)=n\log\left( 2\pi e((1-\alpha_3)P+N_3)\right).
\end{align}
Using \eqref{eq171} and \eqref{eq172} we obtain
\begin{align}\label{eq173}
	\MoveEqLeft{ I(\tilde{M}_3;Y_3)}\nonumber\\
	&= H(Y_3)-H(Y_3\cond \tilde{M}_3)\nonumber\\
	&\leq n\log\left( 2\pi e(P+N_3)\right)-n\log\left( 2\pi e((1-\alpha_3)P+N_3)\right)\nonumber\\
	&=n\log\left(\frac{N_3+P}{N_3+(1-\alpha_3)P}\right).
\end{align}

Next consider 
\begin{align}\label{eq174}
	I(\tilde{M}_2;Y_2\cond \tilde{M}_3)=H(Y_2\cond \tilde{M}_3)-H(Y_2\cond \tilde{M}_2,\tilde{M}_3).
\end{align}
Using the conditional entropy power inequality in \cite{el2011network} , we have
\begin{align}\label{eq175}
	H(Y_3\cond \tilde{M}_3)&=H(Y_2+n_3-n_2\cond \tilde{M}_3)\nonumber\\
	&\geq n\log(2^{2H(Y_2\cond \tilde{M}_3)/n}+2^{2H(n_3-n_2\cond \tilde{M}_3)/n})\nonumber\\
	&= n\log(2^{2H(Y_2\cond \tilde{M}_3)/n}+2\pi e(N_3-N_2))
\end{align}
\eqref{eq172} and \eqref{eq175} imply 
\begin{align}
	\MoveEqLeft{n\log\left( 2\pi e((1-\alpha_3)P+N_3)\right)}\nonumber\\
	&\geq n\log(2^{2H(Y_2\cond \tilde{M}_3)/n}+2\pi e(N_3-N_2))\nonumber
\end{align}
equivalent to
\begin{align}\label{eq176}
	H(Y_2\cond \tilde{M}_3)&\leq n\log(2\pi e((1-\alpha_3)P+N_2)).
\end{align}
Since $n\log( 2\pi eN_2)=H(Y_2\cond \tilde{M}_2,\tilde{M}_3,X)\leq H(Y_2\cond \tilde{M}_2,\tilde{M}_3)\leq H(Y_2\cond \tilde{M}_3)$, there exists $\alpha_2$ such that $0\leq1-\alpha_2-\alpha_3\leq 1-\alpha_3$ and 
\begin{align}\label{eq177}
	H(Y_2\cond \tilde{M}_2,\tilde{M}_3)=n\log(2\pi e((1-\alpha_2-\alpha_3)P+N_2)).
\end{align}
Using \eqref{eq174}, \eqref{eq176} and \eqref{eq177} it follows
\begin{align}\label{eq178}
	I(\tilde{M}_2,\tilde{M}_3;Y_2)&=H(Y_2\cond \tilde{M}_3)-H(Y_2\cond \tilde{M}_2,\tilde{M}_3)\nonumber\\
	&\leq n\log(2\pi e((1-\alpha_3)P+N_2))\\
	\MoveEqLeft{-n\log(2\pi e((1-\alpha_2-\alpha_3)P+N_2))}\nonumber\\
	&=n\log\left(\frac{N_2+(1-\alpha_3)P}{(1-\alpha_2-\alpha_3)P+N_2}\right).
\end{align}

Last we consider
\begin{align}\label{eq179}
	\MoveEqLeft{I(\tilde{M}_1;Y_1\cond \tilde{M}_2,\tilde{M}_3)}\nonumber\\
	&=H(Y_1\cond \tilde{M}_2,\tilde{M}_3)-H(Y_1\cond \tilde{M}_1, \tilde{M}_2,\tilde{M}_3)\nonumber\\
	&=H(Y_1\cond \tilde{M}_2,\tilde{M}_3)-H(Y_1\cond \tilde{M}_1, \tilde{M}_2,\tilde{M}_3, X)\nonumber\\
	&=H(Y_1\cond \tilde{M}_2,\tilde{M}_3)-H(Y_1\cond  X)\nonumber\\
	&=H(Y_1\cond \tilde{M}_2,\tilde{M}_3)-n\log\left( 2\pi eN_1\right)
\end{align}
Using the conditional entropy power inequality in \cite{el2011network} , we have
\begin{align}\label{eq180}
	\MoveEqLeft{H(Y_2\cond \tilde{M}_2, \tilde{M}_3)}\nonumber\\
	&=H(Y_1+n_2-n_1\cond \tilde{M}_2,\tilde{M}_3)\nonumber\\
	&\geq n\log(2^{2H(Y_1\cond \tilde{M}_2,\tilde{M}_3)/n}+2^{2H(n_2-n_1\cond \tilde{M}_2,\tilde{M}_3)/n})\nonumber\\
	&= n\log(2^{2H(Y_1\cond \tilde{M}_2,\tilde{M}_3)/n}+2\pi e(N_2-N_1))
\end{align}

\eqref{eq177} and \eqref{eq180} imply 
\begin{align}
	\MoveEqLeft{n\log(2\pi e((1-\alpha_2-\alpha_3)P+N_2))}\nonumber\\
	&\geq n\log(2^{2H(Y_1\cond \tilde{M}_2,\tilde{M}_3)/n}+2\pi e(N_2-N_1))\nonumber
\end{align}
equivalent to
\begin{align}\label{eq181}
	H(Y_1\cond \tilde{M}_2,\tilde{M}_3)&\leq n\log(2\pi e((1-\alpha_2-\alpha_3)P+N_1)).
\end{align}
Let $\alpha_1=1-\alpha_2-\alpha_3$. Combining the last inequality with \eqref{eq179} we obtain 
\begin{align}\label{eq182}
	I(\tilde{M}_1;Y_1\cond \tilde{M}_2,\tilde{M}_3)\leq n\log\left( \frac{N_1+\alpha_1P}{N_1}\right). 
\end{align}
From \eqref{fano}, \eqref{eq173}, \eqref{eq178} and \eqref{eq182}, it readily follows that $\exists$ $0\leq\alpha_1,\alpha_2,\alpha_3\leq1$ such that $\alpha_1+\alpha_2+\alpha_3=1$ and 

\begin{align}
	\begin{cases}
		H(\tilde{M}_1)&\leq \log\left(1+ \frac{\alpha_1P}{N_1}\right),\nonumber\\
		H(\tilde{M}_2)&\leq \log\left(1+\frac{\alpha_2P}{N_2+\alpha_1P}\right),\nonumber\\
		H(\tilde{M}_3)&\leq \log\left(1+\frac{\alpha_3P}{N_3+(\alpha_1+\alpha_2)P}\right).
	\end{cases}
\end{align}
By replacing $H(\tilde{M}_k)$ with $\sum_{k\in\Kc\subseteq[k]}R_{\Kc}$ and $N_k$ with $\frac{1}{h_k}$ we obtain the result
\begin{align}
	\begin{cases}
		R_1&\leq \log\left( 1+h_1\alpha_1P\right)\nonumber\\
		R_2+R_{12}&\leq \log\left(\frac{1+h_2(\alpha_1+\alpha_2)P}{1+h_2\alpha_1P}\right)\nonumber\\
		R_3+R_{13}+R_{23}+R_{123}&\leq \log\left(\frac{1+h_3P}{1+h_3(\alpha_1+\alpha_2)P}\right),
	\end{cases}
\end{align}
%where $\tilde{R}_k=H(\tilde{M}_k)$.

\subsection{Achievability} \label{appendix:achiev}

Superposition coding achieves the upper bound. For $1\leq k\leq3$, generate random sequences $u^{n}_{k}(m_k)$, $m_k\in[1:2^{n\tilde{R}_k}]$ each i.i.d. $\Nc_{\Cc}(0, \alpha_kP)$. To transmit a triple message $(m_1,m_2,m_3)$ the encoder set $X=u^{n}_{1}(m_1)+u^{n}_{2}(m_2)+u^{n}_{3}(m_3)$.
For decoding:
\begin{itemize}
	\item Receiver $3$ recover $m_3$ from $Y_3=u^{n}_{3}(m_3)+\left( u^{n}_{1}(m_1)+u^{n}_{2}(m_2)+n_3\right) $ by considering $u^{n}_{1}(m_1)+u^{n}_{2}(m_2)$ as noise. The probability of error tends to zero as $n\rightarrow\infty$ if $\tilde{R}_3\leq \log\left( 1+\frac{\alpha_3P}{N_3+(\alpha_1+\alpha_2)P}\right) $.
	\item Receiver $2$ uses successive cancellation. First, it decodes $m_3$ from $Y_2=u^{n}_{3}(m_3)+\left( u^{n}_{1}(m_1)+u^{n}_{2}(m_2)+n_2\right) $ by considering $u^{n}_{1}(m_1)+u^{n}_{2}(m_2)$ as noise. The probability of error tends to zero as $n\rightarrow\infty$ if $\tilde{R}_3\leq \log\left(1+ \frac{\alpha_3P}{N_2+(\alpha_1+\alpha_2)P}\right) $. Since $N_2\leq N_3$ and $\tilde{R}_3\leq \log\left(1+\frac{\alpha_3P}{N_3+(\alpha_1+\alpha_2)P}\right)$, the later condition is satisfied. Second, it subtracts off $u^{n}_{3}(m_3)$ and recover $u^{n}_{2}(m_2)$ from $\tilde{Y}2=u^{n}_{2}(m_2)+\left( u^{n}_{1}(m_1)+n_2\right)$ by treating  $u^{n}_{1}(m_1)$ as noise. The probability of error tends to zero as $n\rightarrow\infty$ if $\tilde{R}_2\leq \log\left(1+\frac{\alpha_2P}{N_2+\alpha_1P}\right) $.
	\item Receiver $1$ uses successive cancellation twice. First, it decodes $m_3$ from $Y_1=u^{n}_{3}(m_3)+\left( u^{n}_{1}(m_1)+u^{n}_{2}(m_2)+n_1\right) $ by considering $u^{n}_{1}(m_1)+u^{n}_{2}(m_2)$ as noise. The probability of error tends to zero as $n\rightarrow\infty$ if $\tilde{R}_3\leq \log\left(1+\frac{\alpha_3P}{N_1+(\alpha_1+\alpha_2)P}\right) $. Since $N_1\leq N_3$ and $\tilde{R}_3\leq \log\left(1+\frac{\alpha_3P}{N_3+(\alpha_1+\alpha_2)P}\right)$, the later condition is satisfied. Second, it subtracts off $u^{n}_{3}(m_3)$ and decodes $u^{n}_{2}(m_2)$ by treating  $u^{n}_{1}(m_1)$ as noise. The probability of error tends to zero as $n\rightarrow\infty$ if $\tilde{R}_2\leq \log\left(1+\frac{\alpha_2P}{N_1+\alpha_1P}\right) $. Since $N_1\leq N_2$ and $\tilde{R}_2\leq\log\left(1+\frac{\alpha_2P}{N_2+\alpha_1P}\right)$, the later condition is satisfied. Last, it subtracts off $u^{n}_{2}(m_2)$ and recover  $u^{n}_{1}(m_1)$. The probability of error tends to zero as $n\rightarrow\infty$ if $\tilde{R}_1\leq \log\left(1+\frac{\alpha_1P}{N_1}\right)$. 
\end{itemize}

\section{Proof of Lemma \ref{lem:equivalence}}\label{appendix:equivalence_proof}

Denote $A_k(t)$ the number of files that have been admitted to the system for user $k$ up to slot $t$. Also, note that due to our restriction on the class of policies $\Pi^{CC}$ and our assumption about long enough blocklengths, there are no errors in decoding the files, therefore the number of files correctly decoded for user $k$ till slot $t$ is $D_k(t)$. Since $D_k(t)\leq A_k(t), \forall t\geq 0, \forall k=1,..,K$, if suffices to show that for every arrival rate vector $\bar{\av} \in \Lambda^{CC}$, there exists a policy in $\Pi^{CC}$ for which the delivery rate vector is $\bar{\rv} = \bar{\av}$. 

We shall deal only with the interior of $\Lambda^{CC}$ (arrival rates at the boundaries of stability region are exceptional cases). Take any arrival rate vector $\bar{\av}\in Int(\Lambda^{CC})$. From \cite[Theorem 4.5]{neely10} it follows that for any there exists a randomized demand combination and transmission policy $\pi^{RAND}$, the probabilities of which depending only on the channel state realization each slot, for which the system is strongly stable. In addition, any arrival rate vector can be constructed via a randomized admission policy. Since the channels are i.i.d. random with a finite state space and queues are measured in files and bits, the system now evolves as a discrete time Markov chain $(\mathbf{S}(t), \mathbf{Q}(t), \mathbf{H}(t))$, which can be checked that is aperiodic, irreducible ad with a single communicating class. In that case, strong stability means that the Markov chain is ergodic with finite mean. 

Further, this means that the system reaches to the set of states where all queues are zero infinitely often. Let $T[n]$ be the number of timeslots between the $n-$th and $(n+1)-$th visit to this set (we make the convention that $T[0]$ is the time slot that this state is reached for the first time). In addition, let $\hat{A}_k[n], \hat{D}_k[n]$ be the number of demands that arrived and were delivered in this frame, respectively. Then, since within this frame the queues start and end empty, we have \[
[\hat{A}_k[n] = \hat{D}_k[n], \forall n, \forall k.\]

In addition since the Markov chain is ergodic,  
\[
\bar{a}_k = \lim\limits_{t\rightarrow\infty}\frac{A(t)}{t} = \lim\limits_{N\rightarrow\infty}\frac{\sum_{n=0}^N\hat{A}_k[n]}{\sum_{n=0}^NT[n]}
\]
and 
\[
\bar{r}_k = \lim\limits_{t\rightarrow\infty}\frac{D(t)}{t} = \lim\limits_{N\rightarrow\infty}\frac{\sum_{n=0}^N\hat{D}_k[n]}{\sum_{n=0}^NT[n]}
\]

Combining the three expressions, $\bar{\rv} = \bar{\av}$ thus the result follows. 

\section{Proof of Theorem \ref{th:optimality_infinite}}\label{appendix:performance_proof}

From Lemma \ref{lem:equivalence} and Corollary \ref{cor:equivalentOptimization}, it suffices to prove that under the online policy the queues are strongly stable and the resulting time average admission rates maximize the desired utility function subject to minimum rate constraints. 

We first look at policies that take random decisions based only on the channel realizations. Since the  feasibility region $\Lambda^{CC}$ is a convex set (see Theorem ), any point in it can be achieved by properly time-sharing over the possible control decisions. We focus on two such policies, one that achieves the optimal utility and another on that achieves (i.e. admits and stabilizes the system for that) a rate vector in th $\delta-$ interior of $\Lambda^{CC}$. We then have the following Lemmas:  

\begin{lemma}[Static Optimal Policy]\label{lem:StaticOptimalPolicy} Define a policy $\pi^*\in\Pi^{CC}$ that in each slot where the channel states are $\mathbf{h}$ works as follows:  (i) it pulls random user demands with mean $\bar{a}_k^*$, and it gives the virtual queues arrivals with mean $\bar{\gamma}_k = \bar{a}_k^*$ as well (ii) the number of combinations for subset $\Jc$ is a random variable with mean $\bar{\sigma}^*_{\Jc}$ and uniformly bounded by $\sigma_{max}$, (iii) selects one out of $K+1$ suitably defined rate vectors $\mathbf{\mu^l}\in \Gamma(\hv), l=1,..,K+1$ with probability $\psi_{l, \hv}$. The parameters above are selected such that they solve the following problem: 
	\begin{align}\nonumber
	\max_{\bar{\av}}&\sum_{k=1}^Kg_k(\bar{a}_k^*)\\ \nonumber
	\text{s.t.}\quad & \sum_{\Jc: k\in \Jc}\bar{\sigma}_{\Jc} \geq \bar{a}_k^*, \forall k\in\{1,..,K\}\\ \nonumber
	& \sum_{\Jc: \Ic \subseteq \Jc}b_{\Jc, \Ic}\bar{\sigma}_{\Jc} \geq \sum_{\hv}\pi_{\hv}\sum_{l=1}^{K+1}\psi_{l, \hv}\mu^l_{\Ic}(\hv), \forall \Ic \in 2^{\Kc}
	\end{align}
	Then, $\pi^*$ results in the optimal delivery rate vector (when all possible policies are restricted to set $\Pi$). 	
\end{lemma}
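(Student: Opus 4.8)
The plan is to treat this as a standard existence-of-static-policy result in the Lyapunov framework, whose only non-routine ingredient is the claim that at each channel state only $K+1$ rate vectors are needed. First I would invoke Corollary \ref{cor:equivalentOptimization} together with the convexity (Theorem \ref{th:feasibilityRegion}) and boundedness of $\Lambda^{CC}$ and the continuity of the $g_k$ to assert that $\sum_k g_k(\bar a_k)$ attains its maximum at some $\bar\av^\star\in\Lambda^{CC}$. By the feasibility characterization \eqref{eq:all_packets_get_combined}--\eqref{eq:all_codewords_get_transmitted} there then exist combination rates $\bar\sigma^\star_\Jc\in[0,\sigma_{max}]$ and an average rate point $\muv\in\sum_\hv\pi_\hv\Gamma(\hv)$ satisfying both families of constraints. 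These $\bar a^\star_k$ and $\bar\sigma^\star_\Jc$ supply the admission/virtual-arrival means in (i) and the combination means in (ii); it remains to realize the transmission part (iii) as a randomization over only $K+1$ points of $\Gamma(\hv)$ per state.

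The core step is the Carath\'eodory reduction. Writing $\muv=\sum_\hv\pi_\hv\muv(\hv)$ with $\muv(\hv)\in\Gamma(\hv)$, I would exploit the structure of Theorem \ref{th:capacity_BC}: every codeword rate $R_\Jc$ appears in exactly one of the $K$ sum constraints, namely the one indexed by the weakest user $k=\max\Jc$. Hence $\Gamma(\hv)$ projects onto the $K$-dimensional convex region of per-user aggregate rates $C_k=\sum_{\Jc:\max\Jc=k}R_\Jc$, and conversely any nonnegative split of a feasible budget vector $(C_1,\dots,C_K)$ among the messages of each user is achievable. I would apply Carath\'eodory's theorem in this $K$-dimensional aggregate space to write the projection of each $\muv(\hv)$ as a convex combination of at most $K+1$ extreme points $\mathbf C^l(\hv)$ with weights $\psi_{l,\hv}$, and lift each extreme point to a full rate vector $\muv^l(\hv)\in\Gamma(\hv)$ by distributing each user's budget among its codeword queues proportionally to the targets. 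The randomized rule --- pick $\muv^l(\hv)$ with probability $\psi_{l,\hv}$ --- then provides, for every queue $\Ic$, an average transmission rate $\sum_\hv\pi_\hv\sum_{l=1}^{K+1}\psi_{l,\hv}\mu^l_\Ic(\hv)$ meeting the codeword arrival rate $\frac{1}{T_{\rm slot}}\sum_{\Jc:\Ic\subseteq\Jc}b_{\Jc,\Ic}\bar\sigma^\star_\Jc$ required by \eqref{eq:all_codewords_get_transmitted}, since the aggregate per-user budget is met on average and the proportional split passes this guarantee down to each individual queue.

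With (i)--(iii) specified, I would verify that the stated optimization is solved: the combinations follow decentralized coded caching so $\pi^\star\in\Pi^{CC}$ and the bound by $\sigma_{max}$ is inherited; constraints \eqref{eq:all_packets_get_combined}--\eqref{eq:all_codewords_get_transmitted} hold by the construction of $\bar\sigma^\star$ and of the $\psi_{l,\hv}$; and the objective equals $\sum_k g_k(\bar a^\star_k)=\max_{\bar\rv\in\Lambda^{CC}}\sum_k g_k(\bar r_k)$ by the choice of $\bar\av^\star$. Since $\pi^\star$ decides as a function of $\hv$ alone and balances each queue's average arrival and service rates, its admitted and hence, via Lemma \ref{lem:equivalence}, delivered rate vector is the optimal $\bar\av^\star$, which gives the claim.

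The main obstacle I anticipate is making the $K+1$ bound rigorous: one must argue that the projection onto aggregate per-user rates loses nothing (the exactly-one-constraint property makes the within-user message split an unconstrained simplex choice), that the aggregate region is genuinely $K$-dimensional and convex so Carath\'eodory yields $K+1$ rather than $2^K$, and that the single proportional lifting simultaneously meets all $2^K-1$ per-queue targets. The admission and combination parts (i)--(ii) and the final optimality bookkeeping are routine by comparison.
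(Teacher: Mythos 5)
Your proposal is correct, and it actually supplies more detail than the paper, which never really proves this lemma: the appendix only asserts that since $\Lambda^{CC}$ is convex (Theorem \ref{th:feasibilityRegion}), ``any point in it can be achieved by properly time-sharing over the possible control decisions.'' Your outline coincides with that implicit argument --- compactness and continuity for existence of the maximizer $\bar{\av}^*$, then Theorem \ref{th:feasibilityRegion} to extract feasible means $\bar{a}^*_k$, $\bar{\sigma}^*_{\Jc}$ and $\muv\in\sum_{\hv}\pi_{\hv}\Gamma(\hv)$ --- but you add the one ingredient the paper leaves unjustified: why $K+1$ rate vectors per channel state suffice when $\Gamma(\hv)\subset\RR_+^{2^K-1}$, where a naive Carath\'eodory count would give $2^K$ points. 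Your reduction is sound: each $R_{\Jc}$ enters exactly one constraint of Theorem \ref{th:capacity_BC} (the one indexed by the weakest user of $\Jc$), so $\Gamma(\hv)$ projects onto a $K$-dimensional convex aggregate region with unconstrained within-user splits; Carath\'eodory there gives $K+1$ extreme points, and your proportional lifting recovers every per-queue target exactly because the proportions sum to one within each user's aggregate. Two remarks. First, the Carath\'eodory step, while the right way to justify a literal ``$K+1$ extreme points'' reading, is not strictly needed: $\Gamma(\hv)$ is itself convex, so taking $\psi_{1,\hv}=1$ and $\muv^1(\hv)=\muv(\hv)$ already instantiates the lemma; randomization is genuinely required only for the integer-valued combination decisions $\sigma_{\Jc}$, which you handle through their means. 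Second, you silently corrected the paper's typo in the second constraint: as printed it reads ``codeword arrival rate $\geq$ average service rate,'' but consistency with \eqref{eq:all_codewords_get_transmitted} and with the sign needed in the drift comparison of Theorem \ref{th:optimality_infinite} requires the direction you use. The only gap you share with the paper is concluding that the delivery rate equals the admission rate for $\pi^*$ via Lemma \ref{lem:equivalence}, although $\bar{\av}^*$ lies on the boundary of $\Lambda^{CC}$ where only a weak form of stability holds; the paper glosses over this too, and the lemma is only ever invoked through the feasibility of its means inside the drift inequality, so nothing downstream breaks.
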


\begin{lemma}[Static Policy for the $\delta-$ interior of $\Lambda^{CC}$]\label{lem:StablePolicy} Define a policy $\pi^{\delta}\in\Pi^{CC}$ that in each slot where the channel states are $\mathbf{h}$ works as follows:  (i) it pulls random user demands with mean $\bar{a}_k^{\delta}$ such that $(\bar{\av}+\mathbf{\delta})\in\Pi^{CC}$, and gives the virtual queues random arrivals with mean $\bar{\gamma}_k \leq \bar{\av} + 
	\epsilon'$ for some $\epsilon'>0$ (ii) the number of combinations for subset $\Jc$ is a random variable with mean $\bar{\sigma}^{\delta}_{\Jc}$ and uniformly bounded by $\sigma_{max}$, (iii) selects one out of $K+1$ suitably defined rate vectors $\mathbf{\mu^l}\in \Gamma(\hv), l=1,..,K+1$ with probability $\psi^{\delta}_{l, \hv}$. The parameters above are selected such that: 
	\begin{align}\nonumber
	& \sum_{\Jc: k\in \Jc}\bar{\sigma^{\delta}}_{\Jc} \geq \epsilon + \bar{a}_k^{\delta}, \forall k\in\{1,..,K\}\\ \nonumber
	& \sum_{\Jc: \Ic \subseteq \Jc}b_{\Jc, \Ic}\bar{\sigma}^{\delta}_{\Jc} \geq \epsilon + \sum_{\hv}\pi_{\hv}\sum_{l=1}^{K+1}\psi'_{l, \hv}\mu^l_{\Ic}(\hv), \forall \Ic \in 2^{\Kc}
	\end{align}
	for some appropriate $\epsilon < \delta$. Then, the system under $\pi^{\delta}$ has mean incoming rates of $\bar{\av}^{\delta}$ and is strongly stable. 
\end{lemma}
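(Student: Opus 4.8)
The plan is to first confirm that the randomized parameters described in the lemma genuinely exist, and then to certify strong stability by a standard Lyapunov-drift estimate in which the strict slack $\epsilon$ drives the drift negative for large backlogs. \emph{Existence of the parameters.} Since $(\bar{\av}+\bs\delta)\in\Lambda^{CC}$, Theorem~\ref{th:feasibilityRegion} supplies combination rates $\bar{\sigma}_{\Jc}\in[0,\sigma_{\max}]$ and an average transmission vector $\muv\in\sum_{\hv}\pi_{\hv}\Gamma(\hv)$ satisfying the two flow-conservation inequalities \eqref{eq:all_packets_get_combined}--\eqref{eq:all_codewords_get_transmitted} for the arrival vector $\bar{\av}+\bs\delta$. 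Writing $\muv=\sum_{\hv}\pi_{\hv}\muv(\hv)$ with $\muv(\hv)\in\Gamma(\hv)$, I would invoke Carath\'eodory's theorem on each convex region $\Gamma(\hv)$; because its boundary is governed by only the $K$ power-split variables $\{\alpha_k\}$ (cf. \eqref{eq:optimalalpha}), each $\muv(\hv)$ is a convex combination of at most $K+1$ extreme rate vectors $\muv^l(\hv)$ with weights $\psi_{l,\hv}$, which is exactly rule~(iii). Finally, because $\bar{\av}+\bs\delta$ sits strictly inside $\Lambda^{CC}$, I can choose $\epsilon\in(0,\min_k\delta_k)$, set the admission means to $\bar{a}_k^{\delta}=\bar{a}_k$ and the virtual arrivals $\bar{\gamma}_k$ strictly below $\bar{a}_k^{\delta}$ (which in particular meets the stated bound $\bar{\gamma}_k\le\bar{a}_k^{\delta}+\epsilon'$), so that both displayed inequalities hold with slack $\epsilon$ and the virtual queue $U_k$ is served faster than it fills.

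\emph{Drift estimate.} Conditioning on $(\Sm(t),\Qm(t),\Um(t))$, I would expand the one-slot drift of $L(t)$ queue-by-queue via the elementary bound $([q-b]^+ + a)^2\le q^2+a^2+b^2+2q(a-b)$. The key structural fact is that under $\pi^{\delta}$ every decision depends only on the i.i.d. channel $\hv(t)$ and independent coin flips, hence is independent of the current backlogs; taking conditional expectations therefore replaces each instantaneous arrival/service by its mean, and the slack inequalities collapse every cross term into a strictly negative contribution. This yields
\[
\Delta(t)\;\le\;B-\epsilon\,\Phi(t),\qquad \Phi(t)\defeq\sum_{k}U_k(t)+\sum_{k}S_k(t)+\tfrac{1}{F}\sum_{\Ic\in 2^{\Kc}}Q_{\Ic}(t),
\]
where $B<\infty$ bounds the second moments of all arrivals and services (finite because $a_k\le\gamma_{k,\max}$, $\sigma_{\Jc}\le\sigma_{\max}$, and the bits transmitted per slot are at most the peak rate times $T_{\rm slot}$); the $1/F$ weighting is inherited from the $Q_{\Ic}^2/F^2$ terms in $L(t)$.

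\emph{Conclusion and main obstacle.} Taking total expectations of $\Delta(t)\le B-\epsilon\Phi(t)$, telescoping over $t=0,\dots,T-1$, using $L(\cdot)\ge0$ and $L(0)<\infty$, and dividing by $T$ gives $\limsup_{T\to\infty}\tfrac1T\sum_{t=0}^{T-1}\EE[\Phi(t)]\le B/\epsilon<\infty$, i.e. strong stability of every queue by the drift criterion of \cite{neely10}; that the mean incoming rate equals $\bar{\av}^{\delta}$ is immediate from the i.i.d. admission construction. I expect the existence step, rather than the drift, to be the crux: one must pin down the correct reduced dimension of $\Gamma(\hv)$ to justify the $K+1$-vector representation, and verify that a \emph{single} $\epsilon$ can slacken all $K+2^{K}$ flow constraints together with the virtual-queue balance. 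Once that uniform slack is secured, the drift bound and its consequence are routine.
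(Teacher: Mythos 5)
Your overall route --- existence of the randomized parameters via Theorem \ref{th:feasibilityRegion}, then strong stability via a Lyapunov drift made negative by the slack $\epsilon$ --- is exactly the argument the paper has in mind: the paper never writes out a proof of this lemma, but justifies it by appeal to Theorem \ref{th:feasibilityRegion}, convexity/time-sharing, and the standard randomized-policy machinery of \cite{neely10}. Your drift half is correct and is indeed the routine part: backlog-independence of the static decisions, the bound $([q-b]^+ +a)^2\le q^2+a^2+b^2+2q(a-b)$, bounded second moments, telescoping, and the Foster-Lyapunov conclusion all go through.

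The existence half, which you yourself flag as the crux, is where the genuine holes are. First, Theorem \ref{th:feasibilityRegion} applied at $\bar{\av}+\bs{\delta}$ gives slack $\delta$ only in the user-queue constraints \eqref{eq:all_packets_get_combined}; it returns the codeword constraints \eqref{eq:all_codewords_get_transmitted} with \emph{no} slack at all, so your assertion that "both displayed inequalities hold with slack $\epsilon$" does not follow as stated. One standard repair: run the combination rates at $(1-\theta)\bar{\sigma}_{\Jc}$ for small $\theta>0$; this preserves an $\epsilon$-slack in \eqref{eq:all_packets_get_combined} because the $\bar{a}_k$ are bounded by $\gamma_{k,max}$, and it creates slack $\theta\sum_{\Jc:\Ic\subseteq\Jc}b_{\Jc,\Ic}\bar{\sigma}_{\Jc}$ in \eqref{eq:all_codewords_get_transmitted}, with codeword queues whose arrival rate is zero handled trivially (they never receive bits). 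Second, your Carath\'eodory count is wrong: $\Gamma(\hv)$ sits in $\RR_+^{2^K-1}$, so the generic bound is $2^K$ points, and the fact that the boundary is swept by the $K$ power-split variables $\{\alpha_k\}$ does not give $K+1$ --- each fixed $\alphav$ yields a whole polytope of rate tuples, not a single point. The repair is simpler than your claim: $\Gamma(\hv)$ is a capacity region, hence convex, so the single vector $\muv(\hv)$ delivered by Theorem \ref{th:feasibilityRegion} can be used with probability one, which trivially fits the lemma's $K+1$-vector format. Third, note that the lemma's second displayed inequality is a typo (direction reversed and $T_{\rm slot}$ missing); taken literally it would make the codeword queues unstable. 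Your drift argument correctly uses the reversed version $T_{\rm slot}\bar{\mu}_{\Ic}\ge\sum_{\Jc:\Ic\subseteq\Jc}b_{\Jc,\Ic}\bar{\sigma}^{\delta}_{\Jc}+\epsilon$, but that makes your existence claim (that the literal inequalities follow from Theorem \ref{th:feasibilityRegion}) inconsistent with your own stability step; a correct write-up must state and use the corrected inequality explicitly.
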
  
%In order to implement the above policy, we need to know the arrival rate demands as well as the distributions of the channels, which is not always feasible in practice (and may change in time). In addition, finding the optimal set of rate vectors $\left\{\mathbf{\mu}^l(\mathbf{h})\right\}_{l=1,..,K+1}$ for every channel realization is also a hard problem. %In the following two Sections we present simple online algorithms that achieve a performance close to the static policy above, without knowledge of the statistics of the system.  

The proof of the performance of our proposed policy is based on applying Lyapunov optimization theory \cite{neely10} with the following as Lyapunov function
\[
L(\Zm)= L(\mathbf{S}, \mathbf{Q}, \mathbf{U}) = \frac{1}{2}\left(\sum_{k=1}^KU_k^2(t) + S_k^2(t) + \sum_{\Ic \in 2^{\Kc}}\frac{Q_{\Ic}^2(t)}{F^2}\right)
.\]
Defining its drift as 
\[
\Delta L(\mathbf{Z}) = \mathbb{E}\left\{L(\mathbf{Z}(t+1))-L(\mathbf{Z}(t))|\mathbf{Z}(t)=\mathbf{Z}\right\}
\], using the queue evolution equations and the fact that $([x]^+)^2\leq x^2$, we have 
\begin{align}\nonumber 
\Delta L(\mathbf{Z}(t)) \leq B &+ \sum_{\Ic \in 2^{\Kc}}\frac{Q_{\Ic}(t)}{F^2}\mathbb{E}\left\{\sum_{\Jc: \Jc\supseteq \Ic}b_{\Ic,\Jc}\sigma_{\Jc}(t) - \mu_{\Ic}(t)\right\}\\ \nonumber
& + \sum_{k=1}^KS_k(t)\mathbb{E}\left\{a_k(t) - \sum_{\Ic: k\in\Ic}\sigma_{\Ic}(t)\right\}\\ \label{eq:drift1_infiniteDemand}
& + \sum_{k=1^K}U_k(t)\mathbb{E}\left\{\gamma_k(t) - a_k(t)\right\}  
,\end{align}
where $B<\infty$ is a constant that depends only on the parameters of the system. Adding the quantity $-V\sum_{k=1}^K\mathbb{E}\left\{g_k(\gamma_k(t))\right\}$ to both hands of \eqref{eq:drift1_infiniteDemand} and rearranging the right hand side, we have 
\begin{align}\nonumber
&\Delta L(\mathbf{Z}(t)) - V\sum_{k=1}^K\mathbb{E}\left\{g_k(\gamma_k(t))\right\}\leq \\ B\nonumber  &+\sum_{k=1}^K\mathbb{E}\left\{-Vg_k(\gamma_k(t)) + \gamma_k(t)U_k(t))\right\}  \\ \nonumber
&+\sum_{\Ic}\mathbb{E}\left\{\sigma_{\Ic}(t)\right\} \left(\sum_{\Jc:\Jc\subseteq \Ic}\frac{Q_{\Jc}(t)}{F^2}b_{\Ic, \Jc}- \sum_{k\in\Jc}S_k(t)\right) \\ \nonumber 
& + \sum_{k=1}^K\left(S_k(t) -  U_k(t)\right)\mathbb{E}\left\{a_k(t)\right\}  \\ \label{eq:drift2_inifiniteDemand}
&-\sum_{\Jc}\frac{Q_{\Jc}(t)}{F^2}\mathbb{E}\left\{\mu_{\Jc}(t)\right\}
\end{align} 

Now observe that the control algorithm minimizes right hand side of \eqref{eq:drift2_inifiniteDemand} given the channel state $\hv(t)$ (for any channel state). Therefore, taking expectations over the channel state distributions, for every vectors $\bar{\av}\in [1,\gamma_{max}]^K, \bar{\mathbf{\gamma}}\in [1,\gamma_{max}]^K, \bar{\mathbf{\sigma}}\in Conv(\{0,..,\sigma_{max}\}^M),  \bar{\mu}\in \sum_{\hv\in\mathcal{H}}\pi_{\hv}\Gamma(\hv)$ it holds that 
\begin{align}\nonumber
&\Delta L(\mathbf{Z}(t)) - V\sum_{k=1}^K\mathbb{E}\left\{g_k(\gamma^{\pi}_k(t))\right\}\leq \\ \nonumber 
B&- V\sum_{k=1}^Kg_k(\bar{\gamma}_k)  + \sum_{k=1}^KU_k(t)\left(\bar{\gamma}_k - \bar{a}_k\right)\\ \nonumber 
&+\sum_{k=1}^KS_k(t)\left(\bar{a}_k -\sum_{\Jc :k\in\Jc}\bar{\sigma}_{\Jc}\right)\\ \label{eq:infinite_drift3} 
& + \sum_{\Jc}\frac{Q_{\Jc}(t)}{F^2}\left(\sum_{\Ic :\Jc\subseteq \Ic}b_{\Jc,\Ic}\bar{\sigma}_{\Ic} - \bar{\mu}_{\Jc}\right)
\end{align}
We will use \eqref{eq:infinite_drift3} to compare our policy with the static policies defined in Lemmas \ref{lem:StaticOptimalPolicy}, \ref{lem:StablePolicy}.  More specifically, replacing the time averages we get from the static stabilizing policy $\pi^{\delta}$ of Lemma \ref{lem:StablePolicy} for some $\delta>0$, we get that thre exist $\epsilon >0$ such that 
\begin{align}\nonumber 
\Delta L(\mathbf{Z}(t)) \leq B&  + V\sum_{k=1}^K\mathbb{E}\left\{g_k(a^{\pi}_k(t))\right\}- V\sum_{k=1}^Kg_k(\bar{a}_k^{\delta}) \\ \nonumber 
&-\epsilon \left(\sum_{k=1}^K S_k(t) + \sum_m\frac{Q_{\Jc}(t)}{F^2}\right)\\ \label{eq:infinite_drift_4}
&\epsilon'\sum_{k=1}^KU_k(t)
\end{align}
Since $a_k(t)\leq \gamma_{max} \forall t$, it follows that $g_k(\bar{a}_k^{\delta})<g_k(\gamma_{max})$, therefore, we have from the Foster-Lyapunov criterion that the system $(\mathbf{S}(t), \mathbf{Q}(t), \mathbf{U}(t))$ has a unique stationary probability distribution, under which the mean queue lengths are finite \footnote{For the utility-related virtual queues, note that if $g_k'(0)<\infty$, then $Y_k(t)<Vg'_k(0)+\gamma_{k,max}$, i.e. their length is deterministically bounded}. Therefore the queues are strongly stable under  our proposed policy. 

We now proceed to proving the utility-delay tradeoff. 

\textbf{Proof of near optimal utility:} Here we compare $\pi$ with the static optimal policy $\pi^*$ from Lemma \ref{lem:StaticOptimalPolicy}. Since $\pi^*$ takes decisions irrespectively of the queue lengths, we can replace quantities $\bar{\mathbf{a}}, \bar{\mathbf{\sigma}},  \bar{\mu}$ with the time averages corresponding to $\pi^*$, i.e. $\bar{\mathbf{a}}^*, \bar{\mathbf{\sigma}}^*,  \bar{\mu}^*$. We thus have:  
\[
V\sum_{k=1}^K\mathbb{E}\left\{g_k(\gamma^{\pi}_k(t))\right\} \geq V \sum_{k=1}^Kg_k(\bar{a}^{*}_k) - B + \mathbb{E}\left\{\Delta L (\mathbf{Z}(t))\right\}
\]
Taking expectations over $\mathbf{Z}(t)$ for both sides and summing the inequalities for $t=0,1,..,T-1$ we get 
\begin{align} \nonumber
\frac{1}{T}\sum_{t=1}^{T-1}\sum_{k=1}^K\mathbb{E}\left\{g_k(\gamma^{\pi}_k(t))\right\} \geq \sum_{k=1}^K&g_k(\bar{a}^{*}_k) - \frac{B}{V} - \frac{\mathbb{E}\left\{L(\mathbf{Z}(0))\right\}}{VT} \\ \nonumber
&+ \frac{\mathbb{E}\left\{L(\mathbf{Z}(T))\right\}}{VT}
\end{align}
Assuming $\mathbb{E}\left\{L(\mathbf{Z}(0))\right\} < \infty$ (this assumption is standard in this line of work, for example it holds if the system starts empty), taking the limit as $T$ goes to infinity gives
\[
\lim\limits_{T\rightarrow\infty}\frac{1}{T}\sum_{t=1}^{T-1}\sum_{k=1}^K\mathbb{E}\left\{g_k(\gamma^{\pi}_k(t))\right\} \geq \sum_{k=1}^Kg_k(\bar{a}^{*}_k) - \frac{B}{V}
\]
In addition, since $g_k(x)$ are concave, Jensen's inequality implies 
\begin{align}\nonumber
\sum_{k=1}^Kg_k(\bar{\gamma}_l^{\pi})&=\sum_{k=1}^Kg_k\left(\lim_{T\rightarrow \infty}\frac{1}{T}\sum_{t=0}^T\mathbb{E}\{\gamma_k^{\pi}(t)\}\right)\\ \nonumber 
&\geq \lim\limits_{T\rightarrow\infty}\frac{1}{T}\sum_{t=1}^{T-1}\sum_{k=1}^K\mathbb{E}\left\{g_k(\gamma^{\pi}_k(t))\right\} \\ \nonumber 
&\geq \sum_{k=1}^Kg_k(\bar{a}^{*}_k) - \frac{B}{V}
.\end{align}
Proving the near optimality of the online policy follows from the above and the fact that $\bar{a}_k^{\pi} > \bar{\gamma}_k^{\pi}$ (since the virtual queues $U_k(t)$ are strongly stable).

%%%%%%%%%%%%%%%%%%%%%%%%%%%%%%%%%%%%%%%%%%%%%%%%%%%%%%%%%%%%%%%%%%%%

\end{document}